\documentclass[runningheads,11pt]{llncs}

\usepackage{a4wide}
\usepackage{xspace}
\usepackage{algorithm}
\usepackage[noend]{algorithmic}
\usepackage{amsfonts,amssymb, amsmath}
\usepackage{xcolor}
\usepackage{graphicx}

\newtheorem{observation}[theorem]{Observation}

\newcommand{\floor}[1]{\lfloor #1 \rfloor}
\newcommand{\nats}{\ensuremath{\mathbb{N}}\xspace}

\newcommand{\opt}{\ensuremath{\textsc{OPT}}\xspace}
\newcommand{\algTwoBatches}{\ensuremath{\textsc{TwoBatches}}\xspace}
\newcommand{\height}{\ensuremath{h}\xspace}
\newcommand{\pdepth}{\ensuremath{d_p}\xspace}
\newcommand{\optcolor}{\ensuremath{\textsc{Color}}\xspace}
\newcommand{\batch}{\ensuremath{\textsc{BatchColor}_f}\xspace}
\newcommand{\kbatch}{\ensuremath{\textsc{$k$-BatchColor}}\xspace}
\newcommand{\alg}{\ensuremath{A}\xspace}

\newcommand{\push}[1]{\ensuremath{\text{Push}(#1)}\xspace}
\newcommand{\pop}{\ensuremath{\text{Pop}()}\xspace}

\newcommand{\icolor}[1]{\ensuremath{\text{color}(#1)}\xspace}
\newcommand{\colorset}{\ensuremath{\mathcal{C}}\xspace}
\newcommand{\setofintervals}{\ensuremath{\mathcal{I}}\xspace}
\newcommand{\leftordering}[2]{\ensuremath{L_{#1}(#2)}\xspace}
\newcommand{\rightordering}[2]{\ensuremath{R_{#1}(#2)}\xspace}

\newcommand{\totalorder}{\ensuremath{T}\xspace}
\newcommand{\smaller}{\ensuremath{<_T}\xspace}

\newcommand{\coloredBatchTwo}{\ensuremath{\textsc{Batch}_2\textsc{-Colored}}\xspace}

\newcommand{\SP}{\ensuremath{{\mathcal P}}\xspace}
\newcommand{\rSP}{\ensuremath{{\mathcal P}_R}\xspace}

\newcommand{\chainOne}{\ensuremath{\textsc{Chain}_1}\xspace}
\newcommand{\chainTwo}{\ensuremath{\textsc{Chain}_2}\xspace}
\newcommand{\chainThree}{\ensuremath{\textsc{Chain}_3}\xspace}
\newcommand{\chain}{\ensuremath{\textsc{Chain}}\xspace}

\newcommand{\rchains}{\ensuremath{\textsc{Chains}_R}\xspace}

\newcommand{\rchain}{\ensuremath{\textsc{Chain}_R}\xspace}
\newcommand{\ileft}{\ensuremath{I_{\ell}}\xspace}
\newcommand{\iright}{\ensuremath{I_{r}}\xspace}

\newcommand{\SET}[1]{\left\{#1\right\}}
\newcommand{\SETOF}[2]{\SET{#1 \mid #2}}
\newcommand{\FLOOR}[1]{\left\lfloor#1\right\rfloor}

\newcommand{\ratio}{\ensuremath{c}\xspace}
\newcommand{\largerratio}{\ensuremath{C}\xspace}

\begin{document}

\title{Batch Coloring of Graphs}
\titlerunning{Batch Coloring of Graphs} 

\author{Joan Boyar \inst{1}
        \and Leah Epstein \inst{2}
        \and Lene M. Favrholdt \inst{1}
        \and \\ Kim S. Larsen \inst{1}
        \and Asaf Levin \inst{3}}
\authorrunning{J. Boyar, L. Epstein, L.\,M. Favrholdt, K.\,S. Larsen, A. Levin} 
\institute{Dept. of Mathematics and Computer Science, University of
Southern Denmark, Odense, Denmark, \email{ \{joan,lenem,kslarsen\}@imada.sdu.dk}
\thanks{Supported in part by the Danish Council for Independent Research, Natural Sciences, grant DFF-1323-00247, and the Villum Foundation, grant VKR023219.} 
\and {Dept. of Mathematics, University of Haifa,
Haifa, Israel, \email{lea@math.haifa.ac.il}} \and {Faculty of
IE\&M, The Technion, Haifa,
Israel, \email{levinas@ie.technion.ac.il}}}



\maketitle

\begin{abstract}
In graph coloring problems, the goal is to assign a positive
integer color to each vertex of an input graph such that adjacent
vertices do not receive the same color assignment. For classic
graph coloring, the goal is to minimize the maximum color used,
and for the sum coloring problem, the goal is to minimize the sum
of colors assigned to all input vertices. In the offline variant,
the entire graph is presented at once, and in online problems, one
vertex is presented for coloring at each time, and the only
information is the identity of its neighbors among previously
known vertices. In batched graph coloring, vertices are presented
in $k$ batches, for a fixed integer $k \geq 2$, such that the
vertices of a batch are presented as a set, and must be colored
before the vertices of the next batch are presented. This last
model is an intermediate model, which bridges between the two
extreme scenarios of the online and offline models. We provide several
results, including a general result for sum coloring and results for
the classic graph coloring problem on restricted graph classes: We
show tight bounds for any graph class containing trees as a subclass
(e.g., forests, bipartite graphs, planar graphs, and perfect graphs),
and a surprising result for interval graphs and $k=2$,
where the value of the (strict and asymptotic) competitive ratio
depends on whether the graph is presented with its interval
representation or not.
\end{abstract}

\section{Introduction}

We study three different graph coloring problems in a model where
the input is given in {\em batches}.  In this model of computation an
adversary reveals the input graph one batch at a time.  Each batch
is a subset of the vertex set together with its edges to the
vertices revealed in the current batch or in previous batches.
After a batch is revealed the algorithm is asked to color the
vertices of this batch with colors which are positive integers,
the coloring must be valid or {\em proper}, i.e., neighbors are colored using
distinct colors, and this coloring cannot be modified later.

The batch scenario is somewhere between online and
offline. In an {\em offline} problem, there is only one batch,
while for an {\em online}
problem, the requests arrive one at a time and have to be handled
as they arrive without any knowledge of future events, so each
request is a separate batch. Many applications might fall between
these two extremes of online and offline. For example, a situation
where there are two (or more) deadlines, an early one with a lower price and a
later one with a higher price can lead to batches.

When considering a combinatorial
problem using batches, we assume that the requests arrive grouped
into a constant number $k$ of batches. Each batch must be handled
without any knowledge of the requests in future batches. As with
online problems, we do not consider the execution times of the
algorithms used within one batch; the focus is on the performance
ratios attainable. Therefore, our goal is to quantify the extent
to which the performance of the solution deteriorates due to the
lack of information regarding the requests of future batches. We
also investigate how much advance knowledge of the number of batches
can help.

The quality of the algorithms is evaluated using competitive
analysis. Let $\alg(\sigma)$ denote the cost of the solution returned
by algorithm $\alg$ on request sequence $\sigma$, and let $\opt(\sigma)$
denote the cost of an optimal (offline) solution. Note that for
standard coloring problems, $\opt(G) = \chi(G)$, where $\chi(G)$
is the chromatic number of the graph $G$.
An online coloring algorithm $\alg$ is \emph{$\rho$-competitive}
if there exists a constant $b$ such that, for all finite request
sequences $\sigma$, $\alg(\sigma)\leq \rho\cdot \opt(\sigma)+b$. The
\emph{competitive ratio} of algorithm $\alg$ is $\inf\{\rho \mid
\alg~ \mbox{\rm is } \rho\mbox{\rm -competitive}\}$. If the
inequality holds with $b=0$, the algorithm is \emph{strictly
$\rho$-competitive} and the {\em strict competitive ratio} is
$\inf\{ \rho \mid \alg~ \mbox{\rm is strictly } \rho
\mbox{\rm -competitive}\}$. 

The First-Fit algorithm for coloring a graph traverses the list of
vertices given in an arbitrary order or in the order they are
presented, and assigns each vertex the minimal color not assigned
to its neighbors that appear before it in the list of vertices.

Other combinatorial problems have been studied previously using
batches. The study of bin packing with batches was motivated by
the property that all known lower bound instances have the form that items
are presented in batches. The case
of two batches was first considered in~\cite{GJY05}, an algorithm
for this case was presented in~\cite{D15}, and better lower bounds
were found in~\cite{BBGM09}. A study of the more general case of $k$ batches was done in~\cite{E16}, and recently, a new
lower bound on the competitive ratio of bin packing with three
batches was presented in~\cite{BBDGT15}. The scheduling
problem of minimizing makespan on identical machines where jobs
are presented using two batches was considered in~\cite{ZCW03}.

\paragraph{Graph classes containing trees.}
The first coloring problem we consider using batches is that of
coloring graph classes containing trees as a subclass
(e.g., forests, bipartite graphs, planar graphs, perfect graphs, and graphs in
general),
minimizing the number of colors used.
Offline, finding a proper coloring of bipartite graphs is
elementary and only (at most) two colors are needed. However,
there is no online algorithm with a constant competitive ratio, even for trees.
Gy\'{a}rf\'{a}s and Lehel~\cite{GLff88} show that for any online
tree coloring algorithm $\alg$ and any $n \geq 1$, there is a tree
on $n$ vertices for which $\alg$ uses at least $\lfloor \log
n\rfloor+1$ colors. 
The lower bound is exactly matched by First-Fit~\cite{GL90}, and hence, the
optimal competitive ratio on trees is $\Theta(\log n)$. 
For general graphs, 
Halld\'{o}rsson and Szegedy~\cite{HS94} have shown that the
competitive ratio 
is $\Omega(n/\log n)$.

We show that any algorithm for coloring trees in $k$ batches uses
at least $2k$ colors in the worst case, even if the number of
batches is known in advance.
This gives a lower bound of $k$ on the competitive ratio of any
algorithm coloring trees in $k$ batches. 
The lower bound is tight, since (on any graph, not only trees),
a $k$-competitive algorithm can be obtained by coloring each
batch optimally with colors not used in previous batches.
Thus, for graph classes containing trees as a subclass,
$k$ is the optimal competitive ratio.

\paragraph{Coloring interval graphs in two batches.}
Next we consider coloring interval graphs in two batches, 
minimizing the number of colors used. An interval graph is a graph
which can be defined as follows: The vertices represent intervals
on the real line, and two vertices are adjacent if and only if
their intervals overlap (have a nonempty intersection). If the
maximum clique size of an interval graph is $\omega$, it can be
colored optimally using $\omega$ colors by using First-Fit on the
interval representation of the graph, with the intervals sorted by
nondecreasing left endpoints. For the online version of the
problem, Kierstead and Trotter~\cite{KT81} provided an algorithm
which uses at most $3\omega-2$ colors and proved a matching lower
bound for any online algorithm. 

The algorithm presented
in~\cite{KT81} does not depend on the interval representation of
the graph, but the lower bound does, so in the online case the
optimal competitive ratio is the same for these two
representations (see~\cite{KST16,NB08} for the current best
results regarding the strict competitive ratio of First-Fit for
coloring interval graphs).
In contrast, when there are two batches, there is a difference. We
show tight upper and lower bounds of $2$ for the case when
the interval representation is unknown and $3/2$ when it is
known,
respectively. Our results apply to both the asymptotic and the strict
competitive ratio.

Note that when the interval representation of the graph is used,
the batches consist of intervals on the real line (and it is not
necessary to give the edges explicitly).

\paragraph{Sum coloring.}
The sum coloring problem (also called chromatic sum) was
introduced in~\cite{KS89} (see~\cite{kub04} for a survey of
results on this problem). The problem is to give a proper coloring
to the vertices of a graph, where the colors are positive
integers, so as to minimize the sum of these colors over all
vertices (that is, if the coloring is defined by a function $C$,
the objective is to minimize $\sum_{v\in V} C(v)$).

Bar-Noy et~al.~\cite{BBHST98} study the problem, motivated by the
following application: Consider a scheduling problem on an
infinite capacity batched machine where all jobs have unit
processing time, but some jobs cannot be run simultaneously due to
conflicts for resources. If the conflicts are given by a graph
where the jobs are vertices and an edge exists between two
vertices, if the corresponding jobs cannot be executed
simultaneously (and thus each batch of jobs corresponds to an
independent set of this graph), the value $s$ of the optimal sum
coloring of the graph gives the sum of the completion times of all
jobs in an optimal schedule. Dividing by the number of jobs gives
the average response time.
The problem when restricted to interval graphs is also motivated by
VLSI routing~\cite{NSS99}. The first problem seems more likely to come
in batches than the second.

The sum coloring problem is NP-hard for general
graphs~\cite{KS89} and cannot be approximated within
$n^{1-\varepsilon}$ for any $\varepsilon>0$
unless $\mbox{ZPP}=\mbox{NP}$~\cite{BBHST98}.
Interestingly, there is a linear time algorithm for
trees, even though there is no constant upper bound on the number
of different colors needed for the minimum sum coloring of
trees~\cite{KS89}. For online algorithms, there is a lower
bound of $\Omega(n/\log^2 n)$ for general graphs
with $n$ vertices~\cite{H00}.

We show tight upper and lower bounds of $k$ on the competitive
ratio when there are $k$ batches and $k$ is known in advance to
the algorithm. The competitive ratio is higher if $k$ is unknown
in advance to the algorithm. We do not give a closed form
expression for the competitive ratio in this case, but give tight
upper and lower bounds on the order of growth of the competitive
ratio and the strict competitive ratio. For any nondecreasing
function $f$, with $f(1)\geq 1$, the optimal competitive ratio for
$k$ batches is $O(f(k))$ if the series $\sum_{i=1}^\infty
\frac{1}{f(i)}$ converges, and it is $\Omega(f(k))$ if the series
diverges. Thus, for example, it is $O(k \log k (\log\log k)^2)$
and $\Omega(k \log k \log\log k)$.

Restricting to trees, First-Fit is strictly $2$-competitive for
the online problem. Thus, First-Fit gives a (strict) competitive
ratio of $2$ regardless of the number of batches.
See for example~\cite{Bor+} for results on the strict competitive ratio of
First-Fit for other graph classes.

\section{Graph Classes Containing Trees}
In this section, we study the problem of coloring trees in $k$ batches.
The results hold for any graph class that
contains trees as a special case, including bipartite graphs,
planar graphs, perfect graphs, and the class of all graphs. 
If we want the algorithm to be polynomial time, then we are
restricted to graph classes where optimal offline coloring is
possible in polynomial time (e.g., perfect graphs~\cite{GLS81}).

The construction proving the following lemma resembles that of the
lower bound of $\Omega(\log n)$ for the competitive ratio for
online coloring of trees~\cite{GLff88}.

\begin{lemma}
\label{lbtreep}
For any integer $k \geq 1$, any algorithm for $k$-batch coloring
of trees can be forced to use at least $2k$
colors, even if $k$ is known in advance.
\end{lemma}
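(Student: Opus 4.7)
The argument is by contradiction: assume some deterministic algorithm $A$ uses at most $2k-1$ colors on every tree presented in $k$ batches, and derive a contradiction by adaptively building a tree on which $A$ is forced to use a $2k$-th color. The construction proceeds by induction on the batches, but strengthened: I would prove that, under the hypothesis, after $j$ batches the adversary can produce a forest containing $M$ pairwise vertex-disjoint subtrees sharing a common set of $2j$ colors, for any desired $M$. This strengthening is essential, because the batch-$k$ forcing gadget needs several disjoint witnesses of each previously-used color.

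For the base case ($j=1$) the adversary presents $M\binom{2k-1}{2}$ pairwise-disjoint edges; each edge uses two distinct colors, and since the hypothesis bounds the number of color pairs by $\binom{2k-1}{2}$, the pigeonhole principle delivers the required $M$ edges sharing a common pair. For the inductive step, invoke the hypothesis with a suitably inflated parameter and, in batch $j$, build many independent ``gadgets''. Each gadget takes two fresh vertices $u$ and $v$: attach $u$ to one vertex of each color of the common set $C$, drawing the $2(j-1)$ witnesses from distinct pairwise-disjoint previous subtrees; then attach $v$ to $u$ and to one vertex of each color of $C$, drawing those witnesses from another $2(j-1)$ pairwise-disjoint subtrees, disjoint from the ones used for $u$. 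Since $u$'s neighbors lie in pairwise disjoint components at the moment $u$ is added, and likewise for $v$, the gadget is a tree, and different gadgets remain disjoint because they draw from disjoint pools. Then $u$ must take a color outside $C$, and $v$ one outside $C \cup \{\icolor{u}\}$; pigeonholing the ordered pairs $(\icolor{u}, \icolor{v})$ over the gadgets (at most $\binom{2k-1}{2}$ such pairs under the hypothesis) leaves $M$ gadgets agreeing on one pair, whose union gives $M$ disjoint subtrees sharing the augmented $2j$-color set. The lemma follows by carrying the induction to $j=k-1$ and executing one final gadget in batch $k$, which forces a vertex to need a color outside a $(2k-1)$-element set, contradicting the hypothesis.

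\textbf{Main obstacle.} The subtle technical point is preserving the tree structure under the forcing construction: a single new vertex joined to two already-connected vertices would create a cycle, so the batch-$j$ gadget needs genuinely disjoint witnesses for each color in $C$, and the two vertices $u, v$ of one gadget need disjoint pools of witnesses from each other. This is why the naive statement ``there exists a tree forcing $2(k-1)$ colors'' is too weak to iterate, and why the strengthened inductive statement producing $M$ disjoint copies with a common color set is the correct thing to carry through. Once that is in place, choosing the amplification factor large enough at each level so that the pigeonhole losses (a factor of roughly $4(j-1)\binom{2k-1}{2}$ per level) still leave enough gadgets is routine bookkeeping.
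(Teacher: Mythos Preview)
Your proposal is correct and follows essentially the same strategy as the paper: build many disjoint base edges, pigeonhole on the colour pair used on the endpoints, and iterate so that after $j$ batches you possess a large stock of disjoint trees each exhibiting a common set of $2j$ colours; the forcing at batch $k$ then pushes the algorithm to a $2k$-th colour.

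The one structural difference is bookkeeping. The paper retains ``good level $j$ trees'' for \emph{every} $j<i$ and, in batch $i$, attaches each new endpoint to one vertex of colour $2j-1$ and one of colour $2j$ drawn from two fresh level-$j$ trees, for each $j$. You instead keep only the trees produced in the immediately preceding step (which already contain all $2(j-1)$ colours) and attach each new vertex to $2(j-1)$ of those, one per colour. Both schemes guarantee acyclicity by drawing neighbours from pairwise disjoint components, and both incur a polynomial-in-$k$ loss per level that compounds to an overall exponential size; your version is arguably tidier since it tracks a single family of objects rather than one per level. The minor slip of writing ``ordered pairs'' while bounding their number by $\binom{2k-1}{2}$ is harmless for the argument.
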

\begin{proof}
After each batch, the graph will be a forest. If at some point, at
least $2k+1$ distinct colors are used, no further batches will be
introduced (that is, any remaining batches will be empty). Thus,
in the discussion for each batch, we assume that the vertices of
the batch are colored with at most $2k$ colors.

Batch $i$, $1 \leq i \leq k$, contains $a_i = 2 \cdot
(8k^3)^{k-i}$ vertices. After batch $i$, the graph will contain
$a_i/2$ disjoint {\em level
  $i$ trees}.
A level $i$ tree consists of an edge, called the {\em base edge},
with each endpoint connected to two {\em good} level $j$ trees,
for $1 \leq j \leq i-1$. A good level $j$ tree is a level $j$ tree
with at least one vertex of each color $1,2,\ldots,2j$. Thus, a
good level 1 tree is just one edge, with colors $1$ and $2$ on its
endpoints.

We now explain how the batches are constructed. In particular, we
prove that there are enough good level $j$ trees, for each
$j$, $1 \leq j \leq k$. Once this is in place, we have proven
that, after the $k$th batch, each color $1,2,\ldots,2k$ has been
used.

The first batch is a matching over its vertices. Each edge of the
first batch must be colored with two different colors. Since we
are assuming that at most $2k$ colors are used, there are less
than $2k^2$ distinct pairs of colors. Thus, there are at least
$g_1 = a_1/(4k^2)$ edges having the same pair of colors. We rename
the colors such that these are colors $1$ and $2$. These edges are
then the good level $1$ trees and these are the base edges.

In batch $i \geq 2$, there are $a_i/2$ new base edges.
Furthermore, for each $j$, $1 \leq j \leq i-1$, each of the $a_i$
new vertices is connected to a vertex of color $2j-1$ in a good
level $j$ tree and to a vertex of color $2j$ in another good level
$j$ tree (see Figure~\ref{fig:trees} for an illustration). Each
good level $j$ tree is connected to at most one new vertex, and if
it is connected to a new vertex it stops being a good level $j$
tree.  Thus, the graph remains a forest. Therefore, the endpoints
of each batch $i$ base edge must be colored with two different
colors not in $\{1,2\ldots,2i-2\}$. With at most $2k$ colors used,
more than $g_i=a_i/(4k^2)$ of these base edges have the same pair
$c,c'$ of colors on their endpoints. We rename the colors larger
than $2i-2$ such that these two colors are called $2i-1$ and $2i$.

\begin{figure}[ht]
\begin{center}
\includegraphics[angle=90,width=0.65\textwidth]{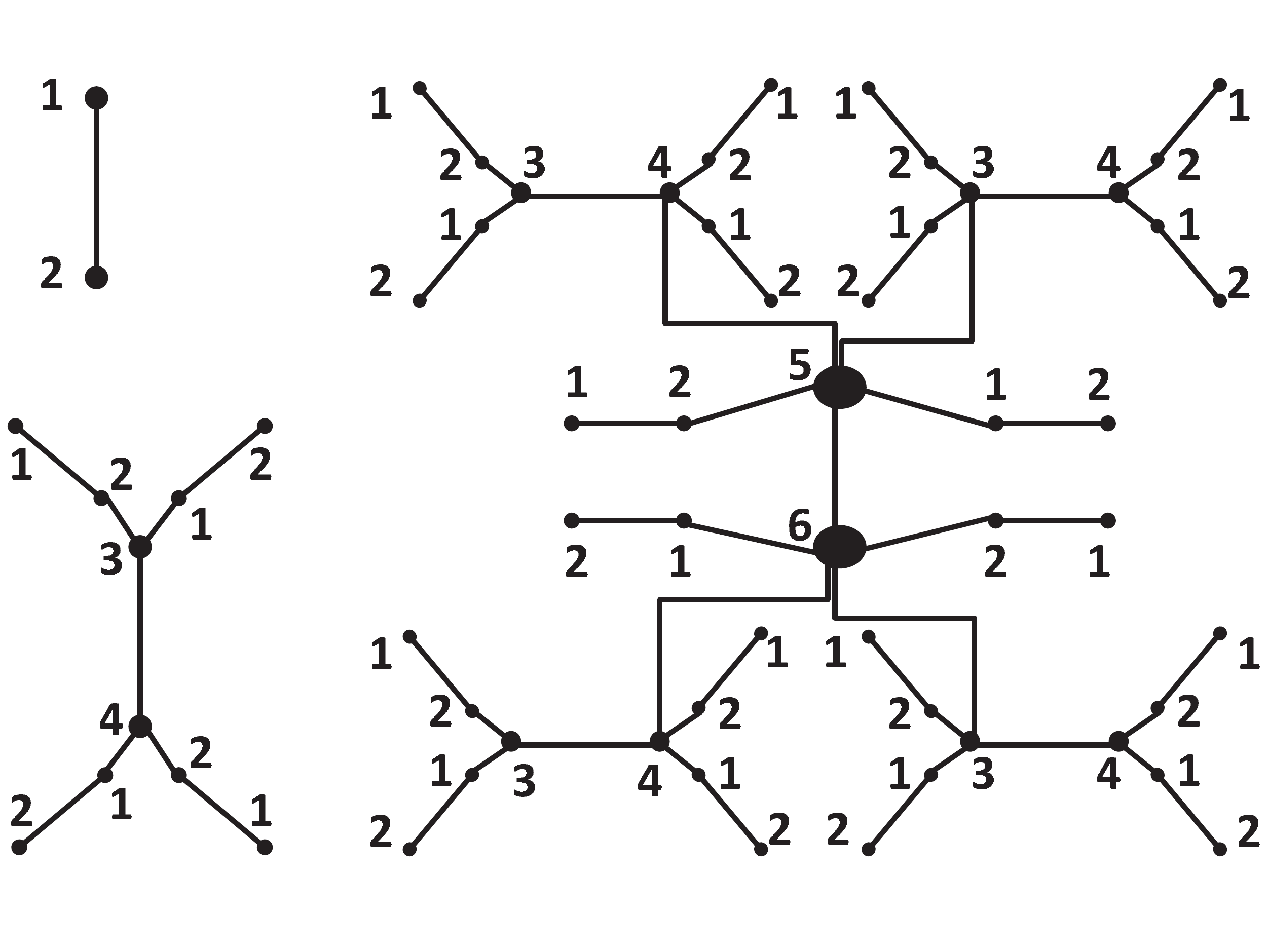}
\end{center}
\caption{The single edge on the left hand size (top) is a good
level $1$ tree. The tree on the left hand side (bottom) is a good
level $2$ tree. The tree on the right hand size is a good level
$3$ tree.} \label{fig:trees}
\end{figure}

To complete the proof, we now show that there are enough good
level $j$ trees, for $1 \leq j \leq k$. For $k=1$, this is clearly
true, since $a_k=2$ and one base edge suffices to force the
algorithm to use $2$ colors. For $k \geq 2$, we note that, for $1
\leq j<k$, good level $j$ trees are used to construct trees in all
later batches. In each batch $i>j$, exactly $2a_i$ such trees are
used. Thus, the total number of good level $j$ trees needed is
\begin{align*}
\sum_{i=j+1}^k 2a_i & = \sum_{i=j+1}^k 4 \cdot (8k^3)^{k-i}
  = 4 \cdot \sum_{i=0}^{k-j-1} (8k^3)^i
  = 4 \cdot \frac{(8k^3)^{k-j}-1}{8k^3-1}
  < \frac{(8k^3)^{k-j}}{k^3}\\
& = \frac{a_j}{2k^3}\\
& \leq \frac{a_j}{4k^2}, \text{ since } k \geq 2\\
& = g_j,
\end{align*}
where $g_j$ is the lower bound that we calculated on the number of
good level $j$ trees.

If a connected graph is required, one vertex can be added to the
last batch, connecting all trees remaining in the forest.
\mbox{}\qed\end{proof}

It is easy to see that the construction in Lemma~\ref{lbtreep} can
be changed to use many fewer vertices when considering bipartite
graphs, rather than restricting to trees. For each bipartite graph
in level $i$, it is sufficient to attach four good bipartite
graphs from level $i-1$.

The following lemma holds for any graph, not only trees.
\begin{lemma}
\label{ubbip}
There is a strictly $k$-competitive algorithm for $k$-batch coloring,
even if $k$ is not known in advance. 
\end{lemma}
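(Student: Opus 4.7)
The plan is to exhibit a very simple algorithm and show directly that it uses at most $k\cdot\opt(G)$ colors, where $G$ is the final graph. The algorithm, call it $\alg$, maintains a counter $m$ of the largest color used so far (initially $m=0$); when batch $i$ arrives, $\alg$ computes an optimal proper coloring of the subgraph $G_i$ induced by the vertices of batch $i$ using the colors $\{m+1,m+2,\ldots,m+\chi(G_i)\}$, and then updates $m$ to $m+\chi(G_i)$. Note that the algorithm never needs to know $k$: it simply processes each batch as it arrives.

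The first step is to verify that the coloring produced by $\alg$ is proper. Edges internal to $G_i$ are handled by the optimal coloring of $G_i$. For an edge between a vertex of batch $i$ and a vertex of an earlier batch $j<i$, the endpoints receive colors from disjoint color intervals (the interval used in batch $j$ lies entirely below the interval used in batch $i$), so they are distinct. Hence the output is a valid proper coloring of $G$.

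The second step bounds the maximum color used. Since $G_i$ is an induced subgraph of $G$, we have $\chi(G_i)\le\chi(G)=\opt(G)$. Therefore, after all $k$ batches,
\[
\alg(G) \;=\; \sum_{i=1}^{k}\chi(G_i) \;\le\; k\cdot\opt(G),
\]
which gives the strict $k$-competitive ratio. The only mild obstacle is the assumption that $\alg$ can compute $\chi(G_i)$, i.e., color each batch optimally; for general graphs this is not polynomial time, but the lemma is an existential statement about the competitive ratio and places no restriction on running time. For graph classes where optimal offline coloring is tractable (such as the perfect graphs mentioned earlier), $\alg$ is also polynomial.
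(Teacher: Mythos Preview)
Your proof is correct and follows essentially the same approach as the paper: color each batch optimally with a fresh block of colors disjoint from all previously used ones, and bound the total by $k\cdot\chi(G)$ since each $\chi(G_i)\le\chi(G)$. Your presentation is in fact slightly more careful than the paper's, since you maintain a running counter $m$ rather than indexing the blocks by $(i-1)\chi(G)$, which makes it explicit that neither $k$ nor $\chi(G)$ need be known in advance; but the underlying idea is identical.
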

\begin{proof}
Consider a graph $G$ presented in $k$ batches and let $\chi = \chi(G)$.
Since each batch of vertices induces a subgraph of $G$, each batch can
clearly be colored with at most $\chi$ colors.
Thus, using the colors $(i-1)\chi+1, (i-1)\chi+2, \ldots ,
(i-1)\chi+\chi$ for batch $i$ yields a strictly $k$-competitive algorithm.
\mbox{}\qed\end{proof}

Theorem~\ref{thm:bipartite} below follows
directly from Lemmas~\ref{lbtreep} and~\ref{ubbip}.
\begin{theorem}
\label{thm:bipartite}
For any graph class containing trees as a special case, 
the optimal (strict) competitive ratio for $k$-batch coloring is
$k$, regardless of whether or not $k$ is known in advance.
\end{theorem}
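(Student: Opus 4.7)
The plan is to derive the theorem as an immediate corollary of Lemmas~\ref{lbtreep} and~\ref{ubbip}.

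The upper bound comes for free. By Lemma~\ref{ubbip}, there is a strictly $k$-competitive algorithm for $k$-batch coloring on any graph, and inspecting its proof shows that the algorithm does not need to know $k$ in advance: it simply assigns each new batch a fresh block of $\chi(G)$ colors disjoint from those already used. Hence, on any graph class, and in particular on every superclass of trees, the (strict) competitive ratio is at most $k$, regardless of whether $k$ is revealed up front.

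For the matching lower bound, I would invoke Lemma~\ref{lbtreep}, which produces---even when $k$ is announced in advance---a tree $T$ on which any $k$-batch coloring algorithm is forced to spend at least $2k$ colors. Since $T$ is a non-trivial tree, it is bipartite and $\opt(T)=2$, so the ratio $\alg(T)/\opt(T)$ is at least $k$, giving a strict competitive ratio of at least $k$ on the class of trees. Because the adversarial instance is itself a tree, the same lower bound carries over verbatim to every graph class that contains trees as a subclass---forests, bipartite graphs, planar graphs, perfect graphs, and the class of all graphs. Moreover, a lower bound against algorithms that know $k$ in advance is also a lower bound against algorithms that do not, since knowing $k$ can only help.

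Combining the two bounds yields an optimal strict competitive ratio of exactly $k$ in all four scenarios (any superclass of trees, with $k$ known or unknown). The argument is essentially bookkeeping on top of the two lemmas, so I do not anticipate any real obstacle; the only point to be careful about is the direction of implication between the ``$k$ known'' and ``$k$ unknown'' settings, which goes the favorable way for both the upper bound (stronger algorithm in Lemma~\ref{ubbip}) and the lower bound (weaker adversary in Lemma~\ref{lbtreep}).
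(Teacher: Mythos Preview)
Your proposal is correct and matches the paper's own argument, which simply states that the theorem follows directly from Lemmas~\ref{lbtreep} and~\ref{ubbip}. Your write-up spells out the details (that $\opt(T)=2$ for the adversarial tree, that the lower bound transfers to superclasses, and that the ``$k$ known/unknown'' implications go the right way), but the underlying approach is identical.
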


\section{Interval Coloring in Two Batches}
Since not all trees are interval graphs, the lower bound from
the previous section does not apply here.
For the case of interval graphs we show the surprising result
that while coloring in two batches has a tight bound of~$2$,
the problem becomes easier if we assume that the vertices of
the graph are revealed together with their interval representation
(and this interval representation of vertices of the first batch
cannot be modified in the second batch).  
The standard results for online coloring of interval
graphs do not make this distinction: The lower bound is obtained
for the (a priori easier) case where the interval representation
of a vertex is revealed to the algorithm when the vertex is
revealed, while the upper bound holds even if such a
representation is not revealed 
(the online
algorithm only computes a maximum clique size containing the new
vertex and applies the First-Fit algorithm on a subset of the
vertices).  Throughout this section, our lower bounds are with
respect to the asymptotic competitive ratio while our upper bounds
are for the strict competitive ratio, and thus the results are
tight for both measures.

\subsection{Unknown interval representation}
We start with a study of the case where the algorithm is guarantied
that the resulting graph (at the end of every batch) will be an
interval graph, but the interval representation of the vertices of the
first batch is not revealed to the algorithm (and may depend on the
actions of the algorithm).  We show that in this case $2$ is the
best competitive ratio that can be achieved by an online algorithm.

\begin{theorem}\label{no-rep}
For the problem of $2$-batch coloring of interval graphs with unknown interval
representation, the optimal (strict) competitive ratio is $2$.
\end{theorem}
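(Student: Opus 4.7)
I will prove the matching upper and lower bounds separately. The upper bound of~$2$ (strict) follows immediately from Lemma~\ref{ubbip}: since each batch is an induced subgraph of the interval graph $G$, it is itself an interval graph with chromatic number at most $\chi(G)$, and using disjoint optimal palettes of sizes $\chi(G_1),\chi(G_2) \leq \chi(G)$ yields a total of at most $2\chi(G)$ colors.

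For the matching lower bound, I plan to construct, for each large $\omega$, an adversarial $2$-batch input with unknown interval representation on which every algorithm must use at least~$2\omega$ colors while $\chi(G) = \omega + O(1)$, so that both the strict and asymptotic ratios approach~$2$. Batch~$1$ is the abstract graph consisting of many disjoint copies of $K_\omega$, realizable as an interval graph with $\chi(G_1) = \omega$ but with the representation deliberately not committed. I split on the number $c$ of colors the algorithm uses on $G_1$. If $c \geq 2\omega$, the adversary sets $G_2 = \emptyset$ and the ratio is already at least $2\omega/\omega = 2$. Otherwise, using pigeonhole over the at most $\binom{2\omega-1}{\omega}$ possible $\omega$-palettes and the $\omega!$ orderings within a palette, for sufficiently many batch-$1$ cliques the adversary extracts an $\omega$-subfamily whose color patterns align in a usable way. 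It then commits to the interval representation clique by clique, choosing which vertex of each clique is leftmost and which is rightmost so that the extremal vertices collectively carry all $\omega$ colors of the common palette, and finally introduces $\omega$ new vertices $u_1,\ldots,u_\omega$ in batch~$2$ whose intervals share a common hub point in a gap between batch-$1$ cliques and extend outward to enter each selected clique only at one or two extremal intervals (never covering a whole clique). Each $u_j$ is then adjacent to batch-$1$ vertices covering all $\omega$ palette colors, forcing $u_j$ onto a color outside the palette; since the $u_j$'s form a $K_\omega$ in $G$, this forces $\omega$ pairwise distinct fresh colors and hence $\alg(G) \geq 2\omega$, while the partial-touch design keeps $\omega(G) = \omega + O(1)$.

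The main obstacle is achieving, in the second case, the three geometric conditions simultaneously: the $u_j$'s must form a clique (so by Helly's property their intervals share a common point), no $u_j$ may cover any batch-$1$ clique in full (else $\omega(G)$ jumps to $2\omega$), and collectively the $u_j$'s batch-$1$ neighbourhoods must cover every palette color. Placing the shared point in a gap between batch-$1$ cliques is necessary to avoid dragging a batch-$1$ vertex into the $u$-clique, and this forces the $u_j$'s to enter cliques only at their boundary---so each $u_j$ touches a clique only at its leftmost or rightmost vertex in the chosen representation. It is precisely the freedom to choose, after seeing the algorithm's coloring, which vertex of each batch-$1$ clique plays the role of leftmost/rightmost that makes the extremal colors match what the $u_j$'s must avoid; without this freedom the argument breaks down, consistent with the better $3/2$ bound obtainable once the representation is revealed with batch~$1$.
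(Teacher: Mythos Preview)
Your upper bound argument is fine and matches the paper. The lower bound sketch, however, has a genuine geometric obstruction that cannot be repaired while keeping batch~1 as disjoint copies of $K_\omega$.

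Here is the issue. You require the $u_j$'s to form a clique (so by Helly they share a hub point $p$), and you require each $u_j$ to be adjacent to batch-1 vertices of all $\omega$ palette colours, while also requiring that no $u_j$ covers any batch-1 clique in full so that $\omega(G)=\omega+O(1)$. These three requirements are incompatible. Take the $u_j$ whose interval is shortest on, say, the right of $p$; call it $u_1$. At the right endpoint of $u_1$, \emph{all} $\omega$ intervals $u_1,\ldots,u_\omega$ are present (they all contain $p$ and extend at least as far right). Hence the number of batch-1 vertices present at that point is at most $\omega(G)-\omega=O(1)$, so $u_1$ touches only $O(1)$ batch-1 vertices on the right; the same holds on the left. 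Thus $u_1$ sees only $O(1)$ palette colours, not all $\omega$, and nothing prevents the algorithm from colouring $u_1$ (and, inductively, almost all the $u_j$'s) with palette colours. More concretely, since $u_j$ cannot cross any batch-1 clique point without covering that clique in full, each $u_j$ meets at most one batch-1 clique on each side of $p$, and only at its extremal vertices---at most four colours total. Your pigeonholing over ``$\omega!$ orderings'' does not help here: the adversary already controls which vertex of each clique is extremal, and the bottleneck is the raw count of neighbours, not their identities.

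The paper's construction avoids this by \emph{not} making all batch-1 cliques of size $\omega$. With $\omega=2q$, it uses both $q$-cliques and $2q$-cliques: a batch-2 interval can swallow a $q$-clique entirely and still keep the clique number at $2q$, and the $2q$-cliques can be split by the adversary into an ``inner'' half (carrying the $q$-clique colours, hidden) and an ``outer'' half (exposing $q$ fresh colours). Two bundles of $q$ batch-2 intervals, overlapping at a single point, then each see all $2q$ colours from a small-clique plus an outer half on either side, forcing $2q$ new colours. The two-size trick is exactly what makes the arithmetic close; a single-size batch-1 family does not suffice.
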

\begin{proof}
The upper bound follows from Lemma~\ref{ubbip}. Each of the two induced
subgraphs is an interval graph, and it can be colored optimally in
polynomial time even if the interval representation is not given.

Next, we show a matching lower bound.
For a given $q \in \nats$, let $N_1 = \binom{4q}{q}+1$ and
$N_2 = \binom{4q}{2q}+1$.
In the first batch, the adversary gives $N_1+N_2$ pairwise
nonoverlapping cliques: $N_1$ cliques of size $q$ and $N_2$ cliques of
size $2q$.

Assume that an algorithm uses at most $4q$ colors for the first
batch.
By the pigeon hole principle, there are two cliques of size $q$ that
are colored with the same set $\colorset_1$ of colors.
The vertices of these two cliques will correspond to the intervals
$[5,6]$ and $[9,10]$, respectively.
Similarly, there are two cliques of size $2q$ that are colored with
the same set $\colorset_2$ of colors.
For one of these cliques, $q$ vertices will correspond to the interval
$[0,1]$ and the remaining $q$ vertices will correspond to the interval
$[0,3]$.
If any of these $2q$ vertices are colored with colors from $\colorset_1$,
they will correspond to the interval $[0,1]$.
We let $\colorset_2'$ denote the set of colors used on the vertices
corresponding to the interval $[0,3]$.
Note that $\colorset_1 \cap \colorset_2' = \emptyset$, and hence,
$|\colorset_1 \cup \colorset_2'|=2q$.
For the other of these two cliques, the $q$ vertices colored with
$\colorset_2'$  will correspond to
the interval $[12,15]$ and the remaining $q$ vertices will correspond
to the interval $[14,15]$.
All other intervals are placed to the right of the point $15$ so that
they do not overlap with any of the four cliques just described.

The second batch consists of $q$ vertices corresponding to the
interval $[2,8]$ and $q$ vertices corresponding to the interval
$[7,13]$.
All of these $2q$ intervals overlap with each other and with intervals
of all colors in $\colorset_1 \cup \colorset_2'$.
Thus, the algorithm uses at least $4q$ colors.

Since no clique is larger than $2q$ vertices, \opt uses $2q$ colors.
Since $q$ can be arbitrarily large, this proves that no deterministic
online algorithm can be better than $2$-competitive, even when
considering the asymptotic competitive ratio.
\mbox{}\qed\end{proof}

\subsection{Known interval representation}
We now assume that the vertices are revealed to the
algorithm together with their interval representation.  For this
case, we show an improved competitive ratio of $\frac 32$. We
first state the following lower bound whose proof is a special case
of the lower bound proof of Kierstead and Trotter~\cite{KT81}.

\begin{lemma}
\label{lowerTBknown}
For the problem of $2$-batch coloring of interval graphs with known interval
representation, no algorithm can achieve a competitive ratio
strictly smaller than $\frac 32$.
\end{lemma}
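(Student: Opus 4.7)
The plan is to construct, for every positive integer $q$, a 2-batch instance of interval graph coloring with known interval representation whose chromatic number is $q+O(1)$, but on which every deterministic algorithm must use at least $(3/2)q-O(1)$ colors. Letting $q\to\infty$ then gives the asymptotic lower bound of $3/2$, and the same family handles the strict competitive ratio as well. The construction I would use mimics the Kierstead--Trotter online adversary~\cite{KT81}, with the many rounds of their inductive argument collapsed into just the two available rounds---one per batch.

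First, in batch~1 I would present an interval graph $G_1$ with clique number $q$ whose structure (for instance, two disjoint cliques of size $q$, or a staircase of nested cliques in the style of~\cite{KT81}) forces the algorithm to use $q$ distinct colors and commits it to a coloring pattern that the adversary can read off using the known representation. Let $C$ denote the palette the algorithm uses on batch~1. The adversary then splits into cases on $|C|$. If $|C|\geq\lceil 3q/2\rceil$, send an empty batch~2; since $\chi(G_1)=q$, the ratio is already at least $3/2$. Otherwise, with $|C|<3q/2$, the adversary constructs batch~2 as a cluster of $\lceil q/2\rceil$ carefully placed intervals satisfying three properties: (a)~each one is positioned so that the set of batch~1 intervals it overlaps carries the entire palette $C$, hence forcing a color outside $C$; (b)~they pairwise overlap, forming a clique of size $\lceil q/2\rceil$ that requires $\lceil q/2\rceil$ distinct new colors; and (c)~no batch~2 interval, together with any batch~1 intervals, creates a clique larger than $q$. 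Together these guarantee the algorithm uses at least $q+\lceil q/2\rceil$ colors while $\chi(G)=q+O(1)$, which yields the desired ratio $3/2-o(1)$.

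The hard part will be the simultaneous satisfaction of (a), (b), and~(c) in the second case: forcing new colors without inflating the clique number. This is precisely the interval-geometric core of the Kierstead--Trotter argument: the batch~1 cliques should be realized as nested families, so that a batch~2 interval overlapping a suitable ``suffix'' on one side and a suitable ``suffix'' on the other can be made to cover every color of $C$, while the resulting clique sizes stay at $q$. Adapting this delicate placement so that it works with just two adversarial rounds (instead of the $\Theta(q)$ rounds used in~\cite{KT81}) is the main technical step; the remaining analysis---checking $|C|\geq q$, counting the forced new colors, and bounding $\chi(G)$---is routine bookkeeping.
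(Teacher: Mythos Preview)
Your overall shape---threshold on $|C|$, then force fresh colors in batch~2---matches the paper, but the specific plan has a real gap: conditions (a), (b), (c) cannot be met simultaneously under your parameterization. Take your own suggestion of two disjoint size-$q$ cliques in batch~1, and suppose the algorithm colors both with the same palette $\{1,\dots,q\}$, so $|C|=q<3q/2$ and you are in Case~2. Any batch-2 interval that sees all of $C$ must overlap one of the two cliques and thus already lies in a $(q{+}1)$-clique, violating~(c); packing $\lceil q/2\rceil$ such intervals through a common point only makes it worse. Nested ``staircase'' cliques do not help either: the algorithm is free to place color~$1$ on the innermost interval, so that seeing color~$1$ forces any batch-2 interval to meet the full $q$-clique. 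The arithmetic problem is that you insist the final clique number equal the batch-1 clique number $q$, which leaves no room for the batch-2 clique to touch batch~1 at all; and even if you relax~(c) to $q+O(1)$, you can place only $O(1)$ batch-2 intervals at any batch-1 clique, far short of the $\lceil q/2\rceil$ you need.

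The paper's construction repairs both issues in one move. First, it lets the batch-1 cliques have size $2q$ while the final clique number is $4q$, so batch-2 intervals may overlap a full batch-1 clique without overshooting $\omega$. Second---and this is the idea your outline is missing---it places not two but $\binom{3q}{2q}+1$ disjoint identical cliques in batch~1. If the algorithm uses at most $3q$ colors there, the pigeonhole principle yields two cliques with the \emph{same} color set $S$ of size $2q$. Batch~2 then consists of $2q$ intervals overlapping the first such clique and $2q$ intervals overlapping the second, all $4q$ pairwise overlapping in between. Each batch-2 interval is only required to see all of $S$ (not all of $C$, as your condition~(a) demands), so the $4q$ batch-2 intervals need $4q$ colors outside $S$, giving at least $6q$ colors against $\opt=4q$. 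The whole argument is a few lines; no delicate adaptation of the multi-round Kierstead--Trotter geometry is required.
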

\begin{proof}
This is a special case of the lower bound of Kierstead and
Trotter~\cite{KT81}. The construction is as follows. For a given
$q \in \nats$, let $N = \binom{3q}{2q}$. In the first batch, the
adversary gives $2q$ vertices corresponding to the interval
$[4i,4i+1]$, for $i=0,1,\ldots,N$. Thus, the first batch consists
of $N+1$ pairwise nonoverlapping cliques. The clique of intervals
$[4i,4i+1]$ is called clique $i$.

Assume that an algorithm colors the first batch using at most $3q$
colors. Since there are more than $N$ cliques, there must be two
cliques, clique $k$ and clique $\ell$, colored with the same $2q$
colors. Assume that the cliques are named such that $k<\ell$.

In the second batch, the adversary gives $2q$ vertices
corresponding to each of the intervals $[4k,4k+3]$ and
$[4k+2,4\ell+1]$. To color the second batch vertices, the
algorithm will need $4q$ colors different from the $2q$ colors
used on clique $k$ and clique $\ell$. Thus, the algorithm uses at
least $6q$ colors in total.

Since there is no clique larger than $4q$ vertices, \opt uses only
$4q$ colors. Since $q$ can be chosen arbitrarily large, this shows
that no deterministic online algorithm can be better than
$\frac32$-competitive.
\mbox{}\qed\end{proof}

For the matching upper bound,
we define an algorithm, \algTwoBatches, which is
strictly $\frac32$-competitive,
using Algorithm~\ref{alg-first} to color the first
batch of intervals and Algorithm~\ref{alg-second} to color the
second batch. Intervals can be open, closed, or semi-closed.

Let $\omega$ denote the maximum clique size in the full graph
consisting of intervals from both batches.
For any interval $I$, let \icolor{I} denote the color assigned to
$I$ by \algTwoBatches.
Similarly, for a set \setofintervals of intervals,
\icolor{\setofintervals} denotes the set of colors used to color the
intervals in \setofintervals.

Each
endpoint of a first batch
interval $I$ is called an {\em event point}, and this event point is
associated with $I$.  If there is a point that is an endpoint of
several intervals, we have multiple copies of this
point as event points each of which is associated with a different
interval.

We define a total order, \totalorder, on the event points.
 For two event
points $p$ and $p'$, if $p<p'$, then $p$ appears before $p'$ in the
total order.
For the case $p=p'$, let $I$ and $I'$ be the
two intervals such that $p$ and $p'$ are associated with $I$ and $I'$,
respectively.
We consider a total order satisfying the following properties.
\begin{enumerate}
\item If $p$ and $p'$ are both right endpoints, $p\notin I$, and $p'\in I'$,
  then $p$   appears before $p'$ in the total order of the event points.
\item Otherwise, if $p$ is a left endpoint and $p'$ is an right endpoint, then:
\begin{itemize}
\item If $p\notin I$, then $p'$ appears before $p$ in the total order
  of the event points.
\item If $p\in I$ and $p'\notin I'$, then $p'$ appears before $p$ in
  the total order of the event points.
\item If $p\in I$ and $p'\in I'$, then $p$ appears before $p'$ in the
  total order of the event points.
\end{itemize}
\item Otherwise (that is, $p$ and $p'$ are both left endpoints),
  then if $p\in I$ and $p'\notin I'$, then $p$
  appears before $p'$ in the total order of the event points.
\end{enumerate}
We fix one particular total order, \totalorder, on the event points
satisfying all these conditions.
Observe that this total order is a
refinement of the standard order ``$\leq$'' on the real numbers.
If an event point $p$ appears before another event point $q$ in \totalorder,
we write $p \smaller q$.
If $p$, $q$, and $p'$ are event points, we say that $q$ is {\em
  between} $p$ and $p'$, if and only if $p \smaller q \smaller p'$.
In this case, we will also sometimes say that $q$ is to the right of
$p$ and to the left of $p'$.

For any three points $p$, $q$, and $p'$, where $q$ is {\em not} an
event point, we say that $q$ is between $p$ and $p'$, if and only
if $p < q < p'$. In this case, we will also sometimes say that $q$
is to the right of $p$ and to the left of $p'$.

\subsubsection{First batch.}
Algorithm~\ref{alg-first} processes the event points in the order
given by $T$. When a right endpoint is processed, we say that the
color of the associated interval is {\em released}. It is then
\emph{available} until it is used again. When processing a left
endpoint, the associated interval is colored with the most
recently released available color (or a new color, if necessary).
Thus, a stack ordering is used for the colors. Pseudo-code for the
algorithm is given in Algorithm~\ref{alg-first}.

\begin{algorithm}[ht!]
\algsetup{indent=2em}
\begin{algorithmic}[1]
\STATE $\omega_1 \gets$ maximum clique size in the first batch
\STATE Initialize an empty stack of colors
          \COMMENT{\emph{The stack will contain available colors}}
\FOR{$i \gets \omega_1$ downto $1$}
  \STATE $\push{i}$
\ENDFOR
\FOR{each event point, $p$, in the order given by $T$}
  \IF{$p$ is a {\em left} endpoint of an interval $I$}
    \STATE $\icolor{I} \gets \pop{}$
  \ELSIF{$p$ is a {\em right} endpoint of an interval $I$}
    \STATE $\push{\icolor{I}}$
  \ENDIF
\ENDFOR
\end{algorithmic}
\caption{Coloring the first-batch intervals.}
\label{alg-first}
\end{algorithm}

For ease of presentation, we insert $2\omega$ dummy intervals into
the first batch: one clique of size $\omega$ {\em before} all input
intervals and one clique of size $\omega$ {\em after} all input
intervals.
Since these dummy cliques do not overlap with any other intervals,
each will be colored with the colors $1,2,\ldots,\omega$, and they
will not influence the behavior of Algorithm~\ref{alg-first} on the
rest of the first-batch intervals.

We note that it is well-known that one can color an interval graph
with a maximum clique size of $\omega$ using $\omega$ colors, by
maintaining a set of available colors, and traversing the event
points according to the total order \totalorder: Each time a left
endpoint is considered, we color its associated interval with one
of the colors in the set of available colors (removing it from
this set), and each time a right endpoint is considered we return
the color of its interval to the set of available colors.  By
using Algorithm~\ref{alg-first}, we consider one particular
tie-breaking rule used whenever the set of available colors
contains more than one color.  More commonly, one considers the
First-Fit rule of using the minimum color in the set of available
colors.  However, in order to establish the improved bound of
$\frac 32$ on the strict competitive ratio (or even for the
competitive ratio) of the algorithm for two batches, we need to
use a different tie-breaking rule, the one defined by using a
stack as in Algorithm~\ref{alg-first}.

Before analyzing the algorithm, we introduce some additional terminology.
{\em Maximal cliques} always refer only to first-batch intervals.
For each maximal clique, we choose a point, called a {\em clique point},
contained in all intervals of the clique.
If a clique point $p$ appears to the right of another clique point
$q$, we say that the clique corresponding to $p$ appears to the right
of the clique corresponding to $q$, and vice versa.

For each maximal clique, \setofintervals, we order the intervals
of the clique by left and right endpoints, respectively, resulting
in two orderings, \leftordering{\setofintervals}{\cdot} and
\rightordering{\setofintervals}{\cdot}. The further an endpoint is
from the clique point of \setofintervals, the earlier the interval
appears in the ordering. More precisely, for each interval $I \in
\setofintervals$, $\leftordering{\setofintervals}{I} = i$, if the
left endpoint of $I$ appears as the $i$th in \totalorder among the
endpoints associated with intervals in \setofintervals. Similarly,
$\rightordering{\setofintervals}{I}=j$, if the right endpoint of
$I$ appears as the $j$th last in \totalorder among the endpoints
associated with intervals in \setofintervals. As an example,
consider the clique \setofintervals consisting of the three
intervals $a=[1,6]$, $b=[2,4]$, and $c=[3,5]$. For this clique, we
have $\leftordering{\setofintervals}{a} = 1$,
$\leftordering{\setofintervals}{b} = 2$,
$\leftordering{\setofintervals}{c} = 3$ and
$\rightordering{\setofintervals}{a} = 1$,
$\rightordering{\setofintervals}{b} = 3$,
$\rightordering{\setofintervals}{c} = 2$.

The following lemma is illustrated in Figure~\ref{fig:lemma}.

\begin{figure}[ht]
\begin{center}
\input{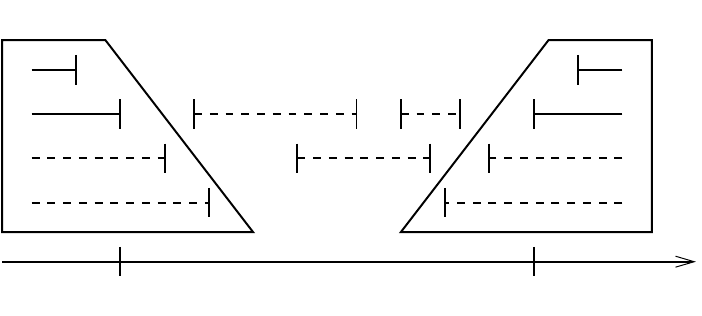_t}
\end{center}
\caption{Illustration of Lemma~\ref{lemma:firstbatch}, with $h=3$. The intervals of
  $\setofintervals'$ are drawn with dashed lines.} \label{fig:lemma}
\end{figure}

\begin{lemma}
\label{lemma:firstbatch}
Consider a maximal clique, $\setofintervals_{\ell}$, of size $m$ and
an interval $\ileft \in \setofintervals_{\ell}$ such that
$\rightordering{\setofintervals_{\ell}}{\ileft} = \height$.
Let $\setofintervals_{\ell}^{\height} = \SETOF{I \in \setofintervals_{\ell}}{
  \rightordering{\setofintervals_{\ell}}{I}<\height}$ be the $\height-1$ intervals
in $\setofintervals_{\ell}$ with the rightmost right endpoints.
Let $\setofintervals_r$ be the first maximal clique of size at least $\height$ to
the right of $\setofintervals_{\ell}$ and let $I_r \in
\setofintervals_r$ be such that
$\leftordering{\setofintervals_r}{I_{r}} = \height$.
Finally, let $p_{\ell}$ be the right endpoint of \ileft, let $p_r$ be the
  left endpoint of \iright,
and consider the set $\setofintervals'$ of first-batch intervals
containing a point $p$ with $p_{\ell} < p < p_r$ or an endpoint $p$
with $p_{\ell} \smaller p \smaller p_r$.
Then, $\icolor{\setofintervals'}
\subseteq \icolor{\setofintervals_{\ell}^{\height}} \,.$
\end{lemma}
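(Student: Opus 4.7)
The plan is to analyze the stack maintained by Algorithm~\ref{alg-first} across the events of $T$ lying in the window $[p_\ell,p_r)$, and to establish a stack invariant that forces every color ever assigned or carried in this window to belong to $\icolor{\setofintervals_\ell^h}$.

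The core sub-claim I would prove first is: for every $T$-moment $t\in[p_\ell,p_r)$, at most $h-1$ first-batch intervals are open just after $t$ is processed, and the set open just after $p_\ell$ is \emph{exactly} $\setofintervals_\ell^h$. I argue by contradiction: let $O$ be a set of at least $h$ open intervals at some $t\in[p_\ell,p_r)$, extend $O$ to a maximal clique $\setofintervals^*$ of the first batch, let $q^*$ be its chosen clique point, and let $q_\ell,q_r$ denote the chosen clique points of $\setofintervals_\ell,\setofintervals_r$. If $q_\ell<q^*<q_r$, then $\setofintervals^*$ is a maximal clique of size $\ge h$ with chosen point strictly between $\setofintervals_\ell$ and $\setofintervals_r$, contradicting the definition of $\setofintervals_r$ as the first such. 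If $q^*\le q_\ell$, then each $J\in O$ contains $q^*$ and (because $J$ is open at $t\ge p_\ell\ge q_\ell$) also a point strictly to the right of $q_\ell$, so by convexity $q_\ell\in J$; hence $O\subseteq\setofintervals_\ell$, and since each $J\in O$ has its right endpoint strictly after $p_\ell$ in $T$, we obtain $O\subseteq\setofintervals_\ell^h$ and $|O|\le h-1$. Symmetrically, if $q^*\ge q_r$, then $O\subseteq\setofintervals_r$ and every $J\in O$ has its left endpoint strictly before $p_r$ in $T$, so $O$ lies among the $h-1$ intervals of $\setofintervals_r$ whose left endpoint precedes $p_r$ in $T$. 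Each case contradicts $|O|\ge h$. The ``exactly $\setofintervals_\ell^h$'' part then follows by combining the bound with the trivial inclusion $\setofintervals_\ell^h\subseteq$ (open just after $p_\ell$).

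Given the sub-claim, I prove by induction over the events in $[p_\ell,p_r)$ the following stack invariant: at every such moment, (i) $c_\ell:=\icolor{\ileft}$ is still in the stack, (ii) every color strictly above $c_\ell$ in the stack is in $\icolor{\setofintervals_\ell^h}$, and (iii) every currently open interval has color in $\icolor{\setofintervals_\ell^h}$. The base case at $p_\ell$ is immediate from the sub-claim. A right-endpoint event pushes the color of a closing open interval, which belongs to $\icolor{\setofintervals_\ell^h}$ by (iii). A left-endpoint event pops the top and assigns it to a new interval; the $\le h-1$ bound ensures that throughout the window the number of right-endpoint events dominates the number of left-endpoint ones, so $c_\ell$ is never at the top when a pop occurs (if it were, the next moment would have $h$ open intervals, contradicting the sub-claim), and hence the popped color lies strictly above $c_\ell$ and thus in $\icolor{\setofintervals_\ell^h}$.

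The lemma then follows by case-analyzing $J\in\setofintervals'$. If $J$ has an endpoint $p$ with $p_\ell\smaller p\smaller p_r$, then $J$ either opens or closes inside the window, and the invariant supplies $\icolor{J}\in\icolor{\setofintervals_\ell^h}$ (via (ii) at $J$'s left-endpoint event, or via (iii) at the moment just before $J$'s right-endpoint event). Otherwise both endpoints of $J$ lie outside $(p_\ell,p_r)$ in $T$ while $J$ contains a real point of $(p_\ell,p_r)$; this forces $J$'s left endpoint to precede $p_\ell$ and its right endpoint to follow $p_r$ in $T$, so $J$ is open just after $p_\ell$, and the sub-claim gives $J\in\setofintervals_\ell^h$ directly. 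The main obstacle is the three-case analysis in the sub-claim, where convexity of intervals must be combined with the freedom in placing the chosen clique point of a hypothetical larger clique so as to trap it inside $\setofintervals_\ell$ or $\setofintervals_r$; only then can the ``first'' qualifier in the definition of $\setofintervals_r$ do its work.
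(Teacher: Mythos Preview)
Your proof is correct and follows the same underlying idea as the paper's, namely that the stack discipline of Algorithm~\ref{alg-first}, together with the fact that no point strictly between $p_\ell$ and $p_r$ lies in $h$ or more first-batch intervals, forces every color assigned in that window to come from $\icolor{\setofintervals_\ell^h}$.

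The difference is one of explicitness. The paper's proof is about ten lines: it dismisses $\setofintervals_\ell^h$ trivially, argues that any other $I'\in\setofintervals'$ must have its left endpoint strictly between $p_\ell$ and $p_r$ (otherwise $I'\cup\setofintervals_\ell^h$ would be an $h$-clique forcing $p_r\preceq p_\ell$), and then asserts in one sentence that at each such left endpoint $p$ ``there are more right endpoints than left endpoints between $p_\ell$ and $p$,'' so the most recently released available color lies in $\icolor{\setofintervals_\ell^h}$. The induction on events and the stack bookkeeping are left implicit. Your version unpacks exactly these two implicit steps: your sub-claim (at most $h-1$ open intervals throughout the window, with equality $\setofintervals_\ell^h$ just after $p_\ell$) is what the paper's ``more right than left endpoints'' sentence is really using, and your three-part stack invariant is the induction the paper suppresses. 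Your three-case analysis on the position of the clique point $q^*$ is more careful than the paper's one-line contradiction, but it establishes the same structural fact. In short, same proof, different level of detail.
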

\begin{proof}
For the intervals in $\setofintervals_{\ell}^{\height}$, the lemma follows
trivially.

We now consider the intervals in $\setofintervals' \setminus
\setofintervals_{\ell}^{\height}$. Note that we can assume $p_{\ell}
\smaller p_r$, since otherwise $\setofintervals'$ is empty, and
the lemma follows trivially.

Assume for the sake of contradiction that some interval $I' \in
\setofintervals' \setminus \setofintervals_{\ell}^{\height}$ has an
endpoint to the left of $p_{\ell}$. This interval would overlap
with all intervals in $\setofintervals_{\ell}^{\height}$, contradicting
the assumption that $p_{\ell} \smaller p_r$. Hence, we only need
to consider intervals with a left endpoint between $p_{\ell}$ and
$p_r$.

Consider any interval with a left endpoint $p$ such that $p_{\ell}
\smaller p \smaller p_r$. It follows from the definition of
$\setofintervals_{r}$ and $I_r$ that there are more right
endpoints than left endpoints between $p_{\ell}$ and $p$ (note
that $\height>1$ in this case). Thus, at $p$, the most recently released
available color is a color in \icolor{\setofintervals_{\ell}^{\height}},
and therefore, the interval associated with $p$ is colored with a
color in \icolor{\setofintervals_{\ell}^{\height}}.
\mbox{}\qed\end{proof}

\subsubsection{Second batch.}
We now describe the algorithm, Algorithm~\ref{alg-second}, given
in pseudo-code below, for coloring the second batch intervals.

A {\em chain} is a set of nonoverlapping second batch intervals.
The algorithm starts with partitioning the second-batch intervals into
$\omega$ chains (some of which may be empty).
This is clearly possible, since the graph is $\omega$-colorable.

The second batch intervals are colored in iterations, two chains per
iteration.
The algorithm keeps a counter, $i$, which is incremented once in each
iteration, and maintains the set \coloredBatchTwo of second batch
intervals that the algorithm has already colored.
In each iteration, a set of nonoverlapping first-batch intervals is
\emph{processed}.
The algorithm maintains the invariant that, at the beginning of each
iteration, any maximal first-batch clique of size $\height$ contains exactly
$\min\{\height,\omega-i\}$ unprocessed intervals.

A first-batch maximal clique of size at least $\omega-i+1$ as well
as its clique point is said to be {\em active}.  The part of the
real line between two neighboring active clique points is called a
{\em region}. Throughout the execution of
Algorithm~\ref{alg-second}, the number of regions is
nondecreasing, and whenever a region is split, the chains of the
region are also split by a simple projection onto each region and
each resulting region will contain its boundary active clique
points (in particular, this means that active clique points may
belong to two regions). In each iteration, each region and its
chains are treated separately.

The algorithm maintains the invariant that no uncolored second batch
interval overlaps with more than one region.  This is the key property,
allowing the algorithm to consider one region at a time in a given
iteration of the algorithm.  First-batch intervals overlapping with
more than one region will be cut into more intervals, with a cutting
point at each active clique point contained in the interval.  Thus, by
cutting the intervals of an active clique of size $\height$, the clique is
replaced by two cliques of size $\height$ in neighboring regions.
When a first-batch interval is cut into parts, the different
parts of the interval may be processed in different iterations, but no
new event points are introduced.

In the $i$th iteration, one chain in each region is colored with the
color of a first-batch interval in the region being processed in this
iteration, and another chain of the region will be colored with the
color $\omega+i$, which has not been used in the region before.
For any point $p$, let $\pdepth$ be the number of second batch intervals
containing $p$.
We say that $p$ is {\em covered} by a set $S$ of
intervals, if there are $\min\{\pdepth,i\}$ second batch intervals in $S$
containing $p$.

Next, we define a set $\SP$ of {\em representative points}, such that
each interval between two neighboring clique points is represented by
one point.
To this end, we define the following equivalence relation between points on the real
line.  For a pair of points $p$ and $p'$, we say that $p$ \emph{is equivalent
to} $p'$ if the following conditions hold:
\begin{itemize}
\item For every clique point $q$, either both $p\leq q$ and $p'\leq q$
or both $p \geq q$ and $p' \geq q$
  (this in particular means that $p$ and $p'$ belong to a common
  region).
\item For every interval $I$ of either the first batch or the second
  batch, we have that either $p,p' \in I$ or the two points $p$ and
  $p'$ are to a common side of $I$ (either both are smaller than any
  point in $I$ or both are larger than any point in $I$).
\end{itemize}
Observe that the number of equivalence classes of this relation is
linear in the number of intervals of the input.  The set of representative
points $\SP$ is defined such that each equivalence class has
exactly one member in $\SP$, chosen arbitrarily.  We use the
following observation.

\begin{observation}\label{obs_cover}
  For a given set $S$ of intervals, at any point after
  line~\ref{initi} in Algorithm~\ref{alg-second}, we have that all
  points (on the real line) are covered by $S$ if and only if all
  points in $\SP$ are covered by $S$.
\end{observation}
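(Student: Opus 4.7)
The plan is to reduce the statement to the defining property of the equivalence relation used to build $\SP$, namely that equivalent points lie in exactly the same set of intervals. The forward direction (if every real point is covered by $S$, then every point in $\SP$ is covered) is immediate because $\SP$ is a subset of the real line, so I would dispatch it in one sentence. The substance is in the reverse direction, and I expect it to follow from a short invariance argument.

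For the reverse direction, I would fix an arbitrary real point $p$ and let $p' \in \SP$ be the representative of the equivalence class of $p$. I would then verify two equalities. First, $d_p = d_{p'}$: the second bullet in the definition of the equivalence relation ensures that for every second-batch interval $I$ we have $p \in I$ iff $p' \in I$, so the set of second-batch intervals containing $p$ equals the set containing $p'$, and thus the two depths coincide. Second, for any set $S$ of intervals, the number of second-batch intervals of $S$ that contain $p$ equals the number that contain $p'$, by exactly the same property restricted to $S$. Combining these, $\min\{d_p,i\} = \min\{d_{p'},i\}$ and the two coverage counts are equal, so $p$ is covered by $S$ if and only if $p'$ is. This immediately yields the reverse implication.

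The one point I would be careful about is that Algorithm~\ref{alg-second} may cut first-batch intervals as it proceeds, and the observation is claimed to hold at any point after line~\ref{initi}, not only initially. However, every cut is made at an active clique point, and the equivalence relation is already defined relative to all clique points: equivalent points are on the same side of every clique point and hence lie in a common region. Consequently a cut splits a first-batch interval into pieces each of which contains $p$ if and only if it contains $p'$, so the equivalence classes are stable under cutting and $\SP$ continues to serve as a valid set of representatives. Since no new event points are introduced by the cuts, the rest of the argument goes through unchanged. I do not anticipate any real obstacle — the observation is essentially a clean unpacking of the equivalence relation.
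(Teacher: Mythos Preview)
The paper does not supply a proof of this statement; it is stated as an ``observation'' and left to the reader as an immediate consequence of how the equivalence relation and $\SP$ are set up. Your unpacking is correct and is exactly the argument the paper expects the reader to fill in: equivalent points lie in the same set of second-batch intervals, so both $d_p$ and the number of second-batch intervals of $S$ containing $p$ are constant on each equivalence class, which makes the coverage predicate constant on classes.

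Your last paragraph is more cautious than strictly necessary. The coverage predicate involves only second-batch intervals (both $d_p$ and the count taken from $S$ range over second-batch intervals), and the algorithm cuts only first-batch intervals. Thus coverage is automatically unaffected by the cuts, independently of where they occur, and one does not even need to argue that the full equivalence relation is stable under cutting. Your stability argument (cuts occur at clique points, which already separate classes) is correct, but it is not required for this particular observation.
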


For a region $R$, we denote by $\rSP$ the set of representative
points contained in region $R$ (that is, $\rSP = R \cap \SP$).

\begin{algorithm}[h!]
\algsetup{indent=0.9em}
\begin{algorithmic}[1]
\STATE Mark all first-batch intervals as unprocessed 
\STATE Create
an optimal coloring of the second-batch intervals, using a set \colorset of
 $\omega$ colors \STATE $R \gets (-\infty,\infty)$
\COMMENT{{\em Initially, there is only one region}} 
\STATE
$\rchains \gets \emptyset$ \STATE $\rSP \gets$ the set of
representative points in region $R$
\FOR{each color $c \in
\colorset$}
  \STATE $\rchains \gets \rchains \cup \{\{I \mid I \text{
    is a second batch interval with color } c\}\}$ \label{initrchain}
\ENDFOR
\STATE $\coloredBatchTwo \gets \emptyset$ \COMMENT{{\em Set of colored
  second batch intervals}}
\STATE $i \gets 0$ \label{initi}
\WHILE[{\em Invariant $I$}]{$i < \floor{\omega/2}$}
  \STATE\COMMENT{\textcolor{blue}{\em Color two chains:}}
  \STATE $i \gets i+1$ \label{inci}
  \STATE Split all regions (incl.\ the assoc.\ chains and sets
  of repr.\ points) at all active clique points
         \label{splitChains}
  \FOR{each region $R$ containing at least one nonempty chain}
    \STATE $(\chainOne,\chainTwo) \gets$ {\sc CreateChains}($R$) \COMMENT{See Algorithm~\ref{alg-chains}}
    \STATE \COMMENT{\textcolor{blue}{\em Color intervals in
        \chainOne and \chainTwo using a first batch color and a new color:}}
    \STATE $I_{\ell} \gets$ the unprocessed first-batch interval
    of the earliest event point in $R$
    \STATE $I_r \gets$ the unprocessed first-batch interval
    of the latest event point in $R$
    \STATE Mark $I_{\ell}$ and $I_r$ as processed
    \STATE Give all intervals in \chainOne the color of $I_{\ell}$
           \label{colorChainOne}
    \STATE Give all intervals in \chainTwo the color $\omega + i$
           \label{colorChainTwo}
    \STATE $\coloredBatchTwo \gets \coloredBatchTwo \cup \chainOne
    \cup \chainTwo$\label{b2colored}
    \STATE $\rchain \gets \rchain \setminus \{ \chainOne, \chainTwo \}$
  \ENDFOR
\ENDWHILE
\STATE \COMMENT{\textcolor{blue}{\em If $\omega$ is odd, each region
    may have one chain left to color:}}
\FOR{each region $R$ where \rchains contains a nonempty chain \chain}
  \STATE $I \gets$ the unprocessed first-batch interval with the
         earliest event point in $R$
  \STATE Give the intervals of \chain the color of $I$
         \label{colorLastChain}
\ENDFOR
\end{algorithmic}
\caption{Coloring the second batch intervals.}
\label{alg-second}
\end{algorithm}

\begin{algorithm}[h!]
\algsetup{indent=0.9em}
\begin{algorithmic}[1]
    \STATE \chainOne $\gets$ a chain in \rchains containing
           the leftmost left endpoint
         \STATE \chainTwo $\gets$  any other chain from \rchains
    \WHILE{some point in $\rSP$ is {\em not} covered by $\coloredBatchTwo \cup
      \chainOne \cup \chainTwo$ \label{whileStart}}
      \STATE $p \gets$ the leftmost point in $\rSP$ {\em not} covered by $\coloredBatchTwo \cup
      \chainOne \cup \chainTwo$
      \STATE \chainThree $\gets$ a chain from \rchains containing $p$ \label{chainThree}
      \IF{for all points $q < p$ in $\rSP$,
               $q$ is contained in \chainThree \OR
               in both \chainOne and \chainTwo}
        \STATE \chainTwo $\gets$ \chainThree{}
             \COMMENT{\emph{\chainTwo now refers to the chain in \rchains that \chainThree refers to}}
      \ELSE
        \STATE $q \gets$ the rightmost point in $\rSP$ left of $p$ violating the
               condition
        \STATE $\chain \gets$ one of \chainOne or \chainTwo not containing $q$
               \COMMENT{\emph{\chain now refers to a chain in \rchains}}
        \STATE \COMMENT{\textcolor{blue}{\em Do a crossover of \chain and \chainThree
            at the point $q$, modifying \chain and \chainThree in \rchains:}}
        \STATE $\textsc{Tail} \gets \{ I \in \chain \mid I \text{
               starts to the right of } q \}$
               \STATE $\textsc{Tail}_3 \gets \{ I \in \chainThree \mid I \text{
               starts to the right of } q \}$
        \STATE $\chain \gets (\chain \setminus \textsc{Tail})
               \cup \textsc{Tail}_3$
        \STATE $\chainThree \gets (\chainThree \setminus \textsc{Tail}_3)
               \cup \textsc{Tail}$
      \ENDIF
    \ENDWHILE \label{whileEnd}
    \RETURN{$(\chainOne,\chainTwo)$}
\end{algorithmic}
\label{alg-chains}
\caption{{\sc CreateChains}($R$)}
\end{algorithm}

We use the following loop invariant for each region to establish
that  \algTwoBatches is correct and strictly $3/2$-competitive. The
proof of the invariant $I$ is based on induction on the value of
$i$.

\medskip\noindent
{\bf Invariant $\boldsymbol{I}$:}
\begin{enumerate}
\item[(I1)] All points $p$ are covered by the set \coloredBatchTwo.
\item[(I2)] No color used for an unprocessed first-batch interval contained in
a region $R$
  has been used for a second batch interval intersecting region $R$ so far.
\item[(I3)] Each active clique has exactly $\omega-i$ unprocessed
  intervals.
\item[(I4)] For each region $R$, \rchain has at most $\omega-2i$ chains.
\end{enumerate}

\begin{lemma}
$I$ is an invariant for the \textbf{while}-loop in Algorithm~\ref{alg-second}.
\end{lemma}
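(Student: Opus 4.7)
I would prove the invariant by induction on the iteration counter $i$.

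The base case ($i=0$, before the first iteration) is immediate from the initialization: \coloredBatchTwo is empty so (I1) holds with $\min\{\pdepth,0\}=0$; no second-batch interval has been colored so (I2) is vacuous; no first-batch clique has size $\geq\omega+1$ so (I3) is vacuous; and the unique initial region carries the $\omega$ chains constructed in line~\ref{initrchain}, matching $\omega-2\cdot 0$ in (I4).

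For the inductive step, assume the invariant holds at the start of an iteration with $i=i_0$ and consider the body after $i$ is incremented to $i_0+1$. I would dispose of the two bookkeeping items first. For (I3), splitting at line~\ref{splitChains} replaces every active clique (with respect to the new threshold) by two copies in neighboring regions, preserving the number of unprocessed intervals of each copy. Previously active cliques had $\omega-i_0$ unprocessed intervals by the induction hypothesis, and newly active cliques (those of size exactly $\omega-i_0$) still have all $\omega-i_0$ intervals unprocessed because processing steps in previous iterations only ever removed intervals of cliques that were already active. Afterwards, for each region the choice of $I_{\ell}$ and $I_{r}$ strips one unprocessed interval from each of the two boundary active cliques, reducing the count to $\omega-i_0-1=\omega-(i_0+1)$. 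For (I4), splitting by projection at active clique points cannot increase the number of nonempty chains in any resulting region (each original chain contributes at most one nonempty chain per side of the split), and the body explicitly removes \chainOne and \chainTwo from \rchains, decreasing the count by $2$ from $\omega-2i_0$ to $\omega-2(i_0+1)$.

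For (I2), the color $\omega+(i_0+1)$ used on \chainTwo is strictly larger than any color produced by Algorithm~\ref{alg-first} and so cannot coincide with the color of any first-batch interval. The color of $I_{\ell}$, which is now used on second-batch intervals intersecting $R$, requires more care: I would combine the stack-based structure of Algorithm~\ref{alg-first} with Lemma~\ref{lemma:firstbatch} to argue that the choice of $I_{\ell}$ as the unprocessed first-batch interval with the earliest event point of $R$ under \totalorder forces this color to be unique among the colors of unprocessed first-batch intervals contained in $R$, so that the disjointness condition is restored once $I_{\ell}$ is marked processed.

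The main obstacle is (I1). The inductive hypothesis supplies coverage to depth $\min\{\pdepth,i_0\}$, so only one additional covering interval is needed at each point of depth $\geq i_0+1$; by Observation~\ref{obs_cover} it is enough to check the representative points in each region, which is precisely the exit condition of the inner \textbf{while}-loop of Algorithm~\ref{alg-chains}. Thus (I1) reduces to showing that {\sc CreateChains}($R$) terminates and returns a pair that satisfies its exit condition. I would argue termination using a lexicographic progress measure on the pair $(\chainOne,\chainTwo)$: the primary component is the position in \totalorder of the leftmost representative point of $R$ not covered by $\coloredBatchTwo\cup\chainOne\cup\chainTwo$, and the secondary component is a count of how many crossover operations are still possible at representative points to the left of that position. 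Each pass of the loop either strictly advances the primary component or keeps it fixed while strictly decreasing the secondary one. The bound $|\rchains|\leq\omega-2i_0$ from (I4) caps the number of chains that can participate and thereby bounds the secondary component, so the loop terminates; the covering property of the returned pair then follows directly from the exit condition.
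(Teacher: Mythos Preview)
Your overall structure matches the paper's: induction on $i$, handling (I1)--(I4) separately, with (I2) via Lemma~\ref{lemma:firstbatch} and (I3), (I4) by the same bookkeeping the paper uses.

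The one real gap is your termination argument for the inner \textbf{while}-loop of {\sc CreateChains} in (I1). Your secondary component (``how many crossover operations are still possible at representative points to the left of that position'') is never made precise, and the appeal to (I4) does not obviously bound it: the number of chains in \rchains is invariant under crossovers, so nothing in (I4) directly controls how many crossovers can occur before the primary advances. The paper sidesteps this entirely by showing that the primary component alone strictly advances in \emph{every} pass of the loop. In the \textbf{if}-branch the swap $\chainTwo\gets\chainThree$ puts every $q\le p$ in $\rSP$ into $\chainOne\cup\chainTwo$. In the \textbf{else}-branch the key observation is that neither \chain nor \chainThree contains $q$, so every interval of either chain lies wholly to one side of $q$; hence the crossover leaves $\chainOne\cup\chainTwo$ unchanged at and to the left of $q$, while every point of $\rSP$ strictly between $q$ and $p$ (which by choice of $q$ was in \chainThree or in both \chainOne and \chainTwo) and $p$ itself now lie in $\chainOne\cup\chainTwo$. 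Thus the leftmost uncovered representative point moves strictly past $p$ after each pass, giving termination in $O(|\rSP|)$ iterations with no secondary measure needed. Replacing your lexicographic argument with this direct one closes the gap.
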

\begin{proof}
We prove by on induction on $i$ that the invariant
holds at the start of every iteration of the \textbf{while}-loop.
\begin{enumerate}
\item[(I1)]
  By Observation~\ref{obs_cover}, it suffices to show that
  the set \coloredBatchTwo covers all points
  in $\SP$. 

  At the beginning of the first iteration, $i=0$, so (I1) is trivially
  true.
  At the beginning of each of the following iterations, it follows
  from~(I1) that each point $p$ is contained in at least $\min\{\pdepth,i\}$
  intervals in \coloredBatchTwo.
  In line~\ref{b2colored}, all intervals of $\chainOne \cup \chainTwo$
  are added to \coloredBatchTwo.  Thus, we only need to prove that, if
  $p$ is not covered after the increment of $i$ in line~\ref{inci}, the
  while loop in lines~\ref{whileStart}--\ref{whileEnd}
  of Algorithm~\ref{alg-chains}
  will add at least one interval containing
  $p$ to $\chainOne \cup \chainTwo$.

  Consider a region $R$ and let $p_{\ell}$ and $p_r$ be defined as in
  Lemma~\ref{lemma:firstbatch}, with $\height = \omega-i+1$.
  Any point in $R$ to the left of $p_{\ell}$ or to the right of $p_r$
  is contained in at least $\height$ first-batch intervals.
  Hence, there cannot be any uncovered points in $R$ to the left of
  $p_{\ell}$ or to the right of $p_r$.

  As long as some point $p$ in $\rSP$ is not covered by $\coloredBatchTwo \cup
  \chainOne \cup \chainTwo$, it follows from~(I1) and
  the definition of covered that $\rchains$ contains $p$ and that
  $\chainOne \cup \chainTwo$ does not contain $p$.
  Thus, \chainThree of line~\ref{chainThree} of Algorithm~\ref{alg-chains} exists.

  If for all points $q<p$ in $\rSP$,
  \chainThree contains $q$ or both \chainOne and
  \chainTwo contain $q$, swapping any of
  \chainOne or \chainTwo with \chainThree will ensure that $\chainOne
  \cup \chainTwo$ contains all points $q \leq p$ in $\rSP$.

  Otherwise, there is a point $q < p$ in $\rSP$ not contained in \chainThree
  and not contained in both \chainOne and \chainTwo.
  The algorithm chooses $q$ as the rightmost such point
  among the points in $\rSP$.
  The algorithm then chooses a $\chain \in \{\chainOne,\chainTwo\}$ such
  that \chain does not contain $q$.
  Since neither \chain nor \chainThree contains $q$, all intervals in
  $\chain \cup \chainThree$ appear strictly before or after $q$.
  Thus it is possible to
  cut each of \chain and \chainThree into two sets, ``head'' and
  ``tail'' consisting of the intervals ending before $q$ or starting after
  $q$, respectively, and let the two chains swap tails, while maintaining the
  property that no two intervals within a chain overlap.

  After this crossover operation, $p$ is contained in \chain.
  Neither $\chainOne$ nor $\chainTwo$ is changed to the left of $q$.
  Since all points between $q$ and $p$ were contained in \chainThree
  or both \chainOne and \chainTwo before the crossover, all such points
  are still contained in $\chainOne \cup \chainTwo$ after the crossover.
  Thus, the leftmost point in $\rSP$ not covered by $\coloredBatchTwo \cup
  \chainOne \cup \chainTwo$ now occurs at or to the right of the leftmost
  point in $\rSP$ to the right of $p$.
  This means that, after $O(n)$
  iterations of the while loop of
  lines~\ref{whileStart}--\ref{whileEnd} of Algorithm~\ref{alg-chains},
  all points in $\SP$ are
  covered by $\coloredBatchTwo \cup \chainOne \cup \chainTwo$.
\item[(I2)] At the beginning, the statement is trivially true,
since no
  second-batch interval has been colored.

  Since no unprocessed first-batch intervals in a region
  are colored with $\icolor{\ileft}$, according to
  Lemma~\ref{lemma:firstbatch}, and since no
  first-batch intervals are colored with $\omega+i$, (I2) is
  maintained.
\item[(I3)] At the beginning of the first iteration, (I3) is
trivially
  true, since every active clique has $\omega$ first-batch intervals,
  and all first-batch intervals are unprocessed.

  In each iteration, the cliques of size $\omega-i+1$ are added to the
  set of active cliques, and one interval of each active clique is
  processed.
  Hence, (I3) is maintained.

\item[(I4)] At the beginning, the statement holds as an optimal
coloring
  consists of exactly $\omega$ color classes and the number of chains
  in $\rchain$ after line~\ref{initrchain} is the number of color classes.

  In each iteration of the while loop in
  lines~\ref{whileStart}--\ref{whileEnd} of Algorithm~\ref{alg-chains},
  the number of chains in
  \rchain is not modified, as in each such iteration we replace two
  chains by another pair of chains covering the same set of second
  batch intervals.  Invariant~(I4) is maintained because the number of
  chains in \rchain is modified once in every iteration in
  line~\ref{b2colored}, where it is decreased by two.
\end{enumerate}
\mbox{}\qed\end{proof}

We use the invariant $I$ to prove that
for any input $\sigma$, \algTwoBatches produces a proper
coloring using at most $\FLOOR{\frac32 \opt(\sigma)}$ colors.

\begin{lemma}
\label{lemma:algTB}
For any input $\sigma$, the algorithm \algTwoBatches produces a proper
coloring using at most $\FLOOR{\frac32 \opt(\sigma)}$ colors.
\end{lemma}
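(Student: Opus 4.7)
The plan is to establish both the color budget and the properness of the coloring, both of which rest on invariant $I$ and Lemma~\ref{lemma:firstbatch}. Let $\omega = \opt(\sigma)$.

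First, I would bound the number of colors. Algorithm~\ref{alg-first} is the standard stack-based interval coloring processing event points in the order \totalorder, so it uses exactly $\omega_1 \leq \omega$ colors. In Algorithm~\ref{alg-second}, the only new colors introduced are the values $\omega+1, \omega+2, \ldots, \omega + \FLOOR{\omega/2}$ assigned to \chainTwo in line~\ref{colorChainTwo} during the at most $\FLOOR{\omega/2}$ iterations of the \textbf{while}-loop; every other coloring step (line~\ref{colorChainOne} and line~\ref{colorLastChain}) reuses a color already assigned to a first-batch interval. Hence the total is at most $\omega + \FLOOR{\omega/2} = \FLOOR{\tfrac{3}{2}\omega}$.

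Next, I would verify properness. For the first batch this is routine for the stack-based algorithm. For the second batch, non-overlap within each chain is immediate from the definition of chain and is preserved by the crossover operation, which swaps tails only at a point $q$ that is uncovered by one of \chainOne, \chainTwo. Within a single iteration and region, \chainOne and \chainTwo receive distinct colors (a first-batch color vs.\ the fresh color $\omega+i$). Across iterations, the \chainTwo colors are pairwise distinct by construction. The only nontrivial case for second-batch/second-batch conflict is the reuse of the same first-batch color $c$ for \chainOne in two different iterations; here invariant (I2) is exactly what is needed, since at the start of the later iteration no second-batch interval intersecting the (sub)region carries the color of any unprocessed first-batch interval in that region, and that is the very color being reused. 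For second-batch/first-batch conflicts, I would apply Lemma~\ref{lemma:firstbatch} with $\height = \omega - i + 1$: every first-batch interval meeting \chainOne in the window between $p_{\ell}$ and $p_r$ lies in $\setofintervals_{\ell}^{\height}$ and therefore carries a color different from $\icolor{I_{\ell}}$, while $\omega+i$ is trivially unused by the first batch. Points of the region outside $(p_{\ell}, p_r)$ are already covered by earlier iterations via (I1), so no chain interval needs to stretch into a regime where conflict could arise.

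The main obstacle is verifying that Lemma~\ref{lemma:firstbatch} is applicable with parameter $\height = \omega - i + 1$ at every call site. This requires the existence, inside the current region, of a right-hand maximal clique $\setofintervals_r$ of size at least $\height$ and an $I_{\ell}$ with $\rightordering{\setofintervals_{\ell}}{I_{\ell}} = \height$; both follow from invariant (I3), combined with the fact that line~\ref{splitChains} splits regions exactly at active clique points, so an active clique of size $\geq \omega - i + 1$ always lies at each boundary of a region entered in iteration $i$. Once this check is made, the final \textbf{for}-loop handling the residual chain when $\omega$ is odd is safe by the same appeal to Lemma~\ref{lemma:firstbatch} and (I2), and the bound of $\FLOOR{\tfrac{3}{2}\opt(\sigma)}$ follows.
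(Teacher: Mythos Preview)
Your proposal is correct and follows essentially the same approach as the paper: bound the color count by observing that only $\FLOOR{\omega/2}$ fresh colors are introduced, then verify properness by combining (I2) for second-batch/second-batch conflicts, Lemma~\ref{lemma:firstbatch} with $\height=\omega-i+1$ for first-batch/second-batch conflicts, and (I1) to confine uncolored chain intervals to the window $(p_\ell,p_r)$. One small wording slip: Lemma~\ref{lemma:firstbatch} guarantees that first-batch intervals in the window have \emph{colors} in $\icolor{\setofintervals_{\ell}^{\height}}$, not that they themselves lie in $\setofintervals_{\ell}^{\height}$; since $I_\ell$ is in the same maximal clique as $\setofintervals_{\ell}^{\height}$, its color is excluded, which is what you need.
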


\begin{proof}
We first note that, by~(I1), no chain in \rchains can contain an
active clique point. Thus, the splitting of chains done in
line~\ref{splitChains} is possible.

In each of the $\floor{\omega/2}$ iterations of the while loop of
lines~\ref{whileStart}--\ref{whileEnd} of Algorithm~\ref{alg-chains},
two chains are colored and
the number of chains in \rchain is decreased by two. If $\omega$
is odd, one additional chain may be colored in
line~\ref{colorLastChain} (and at this point \rchain consists of a
single chain by invariant~(I4)). Thus, all of the $\omega$ chains
containing all second batch intervals are colored.

The actual coloring happens in lines~\ref{colorChainOne}
and~\ref{colorChainTwo}, and possibly in
line~\ref{colorLastChain}.
In line~\ref{colorChainOne}, the color used is the color, $c$, of
the earliest event point associated with the unprocessed
first-batch interval, $I_{\ell}$, in the region $R$.
By Lemma~\ref{lemma:firstbatch} and (I3), \ileft and \iright are
the only first-batch intervals overlapping with $R$ that are
colored with $c$.
By invariant (I1) and the definition of \ileft and \iright, no
interval in \rchains overlaps with \ileft and \iright in $R$.
By invariant (I2), no second batch interval in $R$ has been
colored with $c$ in earlier iterations.
Thus, coloring the intervals of \chainOne results in a legal
coloring of these intervals. The same arguments hold for the
possible coloring done in line~\ref{colorLastChain}.
Since the color $\omega + i$ has never been used before, the
coloring of \chainTwo is also legal.

In summary, in Algorithm~\ref{alg-first}, $\omega$ colors are
used. In Algorithm~\ref{alg-second}, the colors are used again for
some intervals and only $\FLOOR{\omega/2}$ new colors are used.
\mbox{}\qed\end{proof}

Combining Lemmas~\ref{lowerTBknown} and~\ref{lemma:algTB}
shows that the optimal (strict) competitive
ratio for the problem is $\frac32$:
\begin{theorem}
\label{thm:algTB}
\algTwoBatches has a strict competitive ratio of $\frac32$.
\end{theorem}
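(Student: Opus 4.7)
The plan is to derive the theorem directly by combining the two main results already established in this subsection. Lemma~\ref{lemma:algTB} shows the upper bound: for any input $\sigma$, \algTwoBatches returns a proper coloring using at most $\FLOOR{\frac{3}{2}\opt(\sigma)}$ colors. Since $\FLOOR{\frac{3}{2}\opt(\sigma)} \leq \frac{3}{2}\opt(\sigma)$, this immediately witnesses that \algTwoBatches is strictly $\frac{3}{2}$-competitive, with the constant $b=0$ in the definition of strict competitive ratio. Thus the strict competitive ratio of \algTwoBatches is at most $\frac{3}{2}$.

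For the matching lower bound, I would invoke Lemma~\ref{lowerTBknown}, which establishes that no algorithm for $2$-batch coloring of interval graphs with known interval representation can achieve a competitive ratio strictly smaller than $\frac{3}{2}$, even in the asymptotic sense. In particular this applies to \algTwoBatches, so its (strict and asymptotic) competitive ratio cannot be below $\frac{3}{2}$.

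Combining the two bounds pins down the strict competitive ratio of \algTwoBatches exactly at $\frac{3}{2}$, which is precisely what the theorem asserts. No additional obstacles arise here, as all of the technical work (the construction of the stack-based first-batch coloring, the crossover-based second-batch coloring, and the verification of invariant~$I$) has already been absorbed into Lemma~\ref{lemma:algTB}, and the adversarial construction of Kierstead and Trotter adapted to two batches has been absorbed into Lemma~\ref{lowerTBknown}. The proof is therefore a one-line combination of these two lemmas.
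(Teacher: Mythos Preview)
Your proposal is correct and matches the paper's own reasoning exactly: the paper states the theorem immediately after noting that ``Combining Lemmas~\ref{lowerTBknown} and~\ref{lemma:algTB} shows that the optimal (strict) competitive ratio for the problem is $\frac32$,'' which is precisely the one-line combination you describe.
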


\section{Sum Coloring of Graphs in Multiple Batches}
We study two cases separately: the case where the number of
batches is known to the algorithm from the beginning, and the case where
it is not.  Once again, our lower bounds are for the competitive
ratio and our upper bounds are for the strict competitive
ratio.

\subsection{Number of batches known in advance}
We start our study of sum coloring by examining the case where the
algorithm knows the number of batches $k$ in advance. Recall that
we do not require that algorithms used within one batch be
polynomial time.

\begin{lemma}\label{knownk-ub}
There is a strictly $k$-competitive algorithm for sum coloring in
$k$ batches, if $k$ is known in advance.
\end{lemma}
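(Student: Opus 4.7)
The plan is to design an algorithm that partitions the positive integers into $k$ disjoint color palettes, one per batch, and within each palette colors the batch optimally with respect to sum coloring.

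Concretely, I would reserve for batch $i$ the arithmetic progression $P_i = \{i, k+i, 2k+i, 3k+i, \ldots\}$. When batch $i$ arrives, let $G_i$ be the subgraph induced by the vertices of batch $i$; compute an optimal sum coloring of $G_i$ using colors $1,2,\ldots$ (ignoring previous batches entirely, since the palettes are disjoint), and then translate this coloring to $P_i$ by replacing each color $j$ with $(j-1)k+i$. Because the palettes $P_1,\ldots,P_k$ are pairwise disjoint, the resulting coloring of the whole graph is proper: only intra-batch edges matter for each batch, and those are satisfied by the optimal sum coloring of $G_i$.

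For the ratio, the cost on batch $i$ is bounded as follows. If $\chi_S(G_i)$ denotes the optimal sum coloring cost of $G_i$, then a vertex that receives color $j$ in the optimal coloring of $G_i$ contributes $(j-1)k+i \le jk$ to our algorithm's cost, so the algorithm's cost on batch $i$ is at most $k\cdot \chi_S(G_i)$. Summing, \algTwoBatches's total cost is at most $k\sum_{i=1}^{k}\chi_S(G_i)$. On the other hand, an optimal sum coloring of the entire graph $G$ induces a proper coloring of each $G_i$, and the colors used on batch $i$ form a valid sum coloring of $G_i$, so their sum is at least $\chi_S(G_i)$. Adding over batches gives $\opt(\sigma)\ge \sum_{i=1}^{k}\chi_S(G_i)$, and combining the two inequalities yields $\mathrm{ALG}(\sigma)\le k\cdot \opt(\sigma)$, i.e., strict $k$-competitiveness.

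There is no real obstacle here; the only subtle point is making sure the palettes really are disjoint (so no inter-batch edge can cause a conflict) and that the translation $j\mapsto (j-1)k+i$ maps colors in $\{1,\ldots\}$ into $P_i$ with multiplicative blow-up at most $k$, both of which are immediate from the definition of $P_i$. The construction does require that we can sum-color each $G_i$ optimally, but this is permissible since the lemma imposes no polynomial-time requirement on the algorithm.
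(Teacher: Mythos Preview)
Your proof is correct and essentially identical to the paper's: both use the map $j\mapsto (j-1)k+i$ to place batch~$i$ into the residue class $i\pmod k$, bound each translated color by $kj$, and compare against $\opt$ via the observation that the global optimum restricts to a proper coloring of each $G_i$. The only slip is the stray reference to \algTwoBatches, which is a different algorithm in this paper; otherwise the argument matches the paper's line for line.
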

\begin{proof}
For each batch, the algorithm, \kbatch, applies an optimal
procedure, \optcolor, to compute an optimal sum coloring for the
subgraph induced by the set of vertices of batch $i$, separately
from previous batches. In order to construct the solution of the
input graph, \kbatch applies the following transformation: For
every vertex $v$ of batch $i$, if \optcolor colors $v$ with color
$c$, then \kbatch colors $v$ using color $f(i,c)=k\cdot (c-1)+i$.
This function $f$ satisfies $f(i,c) \equiv i \ (\bmod \  k)$, so
if $f(i,c)=f(i',c')$, for some $1 \leq i,i' \leq k$, then $i=i'$.
Moreover, if $f(i,c)=f(i,c')$, then $k(c-c')=0$, and therefore
$c=c'$. Thus, vertices of different batches have different colors,
and two vertices of the same batch have the same color after the
transformation if and only if they had the same color in the
solution returned by \optcolor. As any proper coloring of the graph
provides proper colorings for the $k$ induced subgraphs, the total
cost of the $k$ outputs of \optcolor does not exceed the cost of
an optimal coloring of the
entire graph. 
For any color $c$ and batch $i$, $f(i,c) \leq
k\cdot c$. Thus, the cost of the output is at most $k$ times the
total cost of the $k$ solutions returned by \optcolor (for the
$k$ vertex disjoint induced subgraphs).
\mbox{}\qed\end{proof}

We prove a matching lower bound for this case, which holds even for the
asymptotic competitive ratio.
\begin{lemma}
\label{knownk-lb}
No algorithm for sum coloring in $k$ batches has a competitive ratio
strictly smaller than $k$, even if $k$ is known in advance.
\end{lemma}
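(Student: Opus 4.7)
The plan is to exhibit, for every large parameter $N$, an adaptive adversarial input on which every deterministic $k$-batch sum-coloring algorithm pays at least roughly $k$ times the offline optimum; letting $N\to\infty$ will rule out any competitive ratio (asymptotic or strict) smaller than $k$.

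The instance is the complete $k$-partite graph $K_{n_1,\ldots,n_k}$, with $i$-th part $V_i$ revealed as batch~$i$. I would first observe that the algorithm may, without loss of generality, assign a single color $c_i$ to all of $V_i$ with the $c_i$'s pairwise distinct: adjacent parts force disjoint color sets, and replacing the colors on $V_i$ by the smallest of them only decreases the cost. The algorithm's cost is then $\sum_i c_i\cdot n_i$, while the offline optimum is $\sum_{j=1}^k j\cdot n_{(j)}$, where $n_{(1)}\geq\cdots\geq n_{(k)}$ sorts the sizes decreasingly. The adversary sets $n_1=N$ and, for $i\geq 2$, after observing $c_1,\ldots,c_{i-1}$, picks
\[
n_i \;=\; \begin{cases} N^i, & \text{if } \max\{c_1,\ldots,c_{i-1}\}\leq i-1,\\ 1, & \text{otherwise.}\end{cases}
\]
Intuitively, as long as the algorithm plays the smallest available color at each step the sizes grow geometrically; the first time the algorithm plays a color $\geq i$ at step $i$ the adversary deflates to singletons, and only resumes the geometric growth once the algorithm's later plays have caught up in terms of its running maximum color.

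The heart of the argument is to check that, regardless of the algorithm's choices, the ratio tends to $k$. In the greedy execution ($c_i=i$ for every $i$) the sizes are $N,N^2,\ldots,N^k$, the algorithm pays $\sum_i iN^i\sim kN^k$, and the optimum pays $\sum_j jN^{k-j+1}\sim N^k$, giving ratio $\to k$. For a deviating execution, tracking the first step $j$ at which $c_j\geq j+1$, the adversary's rule forces the algorithm, by the last step, to have assigned a color of value at least $k$ to the eventually-largest part revealed (either because the algorithm jumps straight to $k$ on a not-yet-deflated batch, or because after catching up the growth resumes and the final batch of size $N^k$ receives color $k$). This pairing alone contributes $k$ times the largest size to the algorithm's cost, while the optimum assigns color $1$ to that part. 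The main obstacle I anticipate is giving a clean bookkeeping argument handling arbitrary deviation patterns uniformly, since the algorithm may deviate multiple times and sizes may deflate and reinflate accordingly; I expect to maintain an invariant of the form ``after $i$ batches, either a $k\cdot N^s$ contribution has already been secured on some part of size $N^s$, or the running maximum color equals $i$ and growth continues''. Since the additive slack between the algorithm's cost and $k\cdot\opt$ is bounded by a constant independent of $N$, taking $N$ arbitrarily large yields a ratio arbitrarily close to $k$.
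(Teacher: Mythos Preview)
The adversary rule you propose does not force ratio $k$. Take $k=3$ and consider the algorithm that plays $c_1=2$, $c_2=3$, $c_3=1$. Your rule sets $n_1=N$; then $\max\{c_1\}=2\not\leq 1$, so $n_2=1$; then $\max\{c_1,c_2\}=3\not\leq 2$, so $n_3=1$. The algorithm pays $2N+3+1=2N+4$, while the optimum on sizes $(N,1,1)$ is $N+2+3=N+5$, and the ratio tends to $2$, not $3$. Your stated invariant already fails at the first step of this run: after batch~$1$ with $c_1=2$, no $k\cdot N^s=3N^s$ contribution has been secured (only $2N$), and the running maximum color is $2\neq 1$, so neither alternative holds. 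The informal phrasing (``resume growth once the running maximum has caught up'') does not help, since in this execution the maximum never catches up and yet the ratio stays at $2$.

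The complete $k$-partite framework can be salvaged, but with a different trigger: inflate while $\max\{c_1,\ldots,c_{i-1}\}<k$ and deflate to singletons thereafter. If $j$ is the first step with $c_j\geq k$ (such $j$ exists by pigeonhole, since the $c_i$ are distinct), then $n_j=N^j$ is the largest part, the algorithm pays at least $kN^j$, and the optimum is $N^j(1+O(1/N))+O(k^2)$, giving ratio $\to k$. The paper takes a related but structurally different route: batch $i$ consists of $M^i$ vertices, each joined only to a single designated vertex $v_\ell$ from each earlier batch; the adversary stops if an entire batch receives only colors $\geq k$, and otherwise designates any cheaply-colored vertex as $v_i$. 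This avoids the complete multipartite structure (the graph is very sparse, in fact cycle-free when $k=2$, which the paper exploits for a corollary on trees) and sidesteps the reduction to one color per part.
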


\begin{proof}
Assume for the sake of contradiction that there is a value of
$k$ and an online algorithm \alg for sum coloring of graphs in $k$
batches whose competitive ratio $\rho$ is strictly smaller than
$k$.  Let $M$ be a large integer such that $M >
\max\{2k^2,\frac{2\rho}{k-\rho}\}$.

The algorithm will be presented with $k$ batches, such that after
every batch $i$ ($1 \leq i \leq k$) the input either stops  (the
remaining $k-i$ batches will be empty), or one vertex of the
batch, which will be denoted by $v_i$, is selected as a designated
vertex, and it will be used for constructing the other batches.

Batch $i$ (for $i=1,2,\ldots,k$) is constructed as follows: The
batch consists of a set $V_i$ of $M^i$ vertices, each of which has
$i-1$ neighbors that are $v_1,v_2,\ldots ,v_{i-1}$ (thus, the
vertices of $V_i$ form an independent set and the vertices $v_1,
v_2, \ldots, v_i$ form a clique). For $i\leq k-1$, if the
algorithm colors all vertices of batch $i$ with colors of value at
least $k$, then the input stops. Otherwise, one vertex whose color
is in $\{1,2,\ldots,k-1\}$ is selected to be $v_i$, and the set of
vertices of the next batch, $V_{i+1}$, is presented. If the input
consists of $j$ batches, $V=\cup_{i=1}^j V_i$.

We compute an upper bound on the optimal sum of colors, if the
input stops after the first $j$ batches (we describe solutions which
are not necessarily optimal).  If the input consists of $j$
batches, we next show that the set
$V\setminus\{v_1,v_2,\ldots,v_{j-1}\}$ is independent. Consider a
vertex $v$ of batch $i$. This vertex is presented with edges to
$\{v_1,v_2,\ldots,v_{i-1}\}$. If $v$ does not become $v_i$, it
will not have any further edges. Thus, it is possible to assign
color $1$ to each such vertex, and use color $i+1$ for $v_i$. We
show that, for $1 \leq j \leq k$, this gives a total cost of $O_j
\leq M^j+2M^{j-1}$. For $j=1$ and $j=2$, we obtain $O_1=M$ and
$O_2=M^2+M+1\leq M^2+2M$. For $j \geq 3$, the total cost of this
solution is
\begin{align*}
O_j &=\sum_{i=1}^j (M^i-1)+1+\sum_{i=1}^{j-1} (i+1)=\sum_{i=1}^j
M^i+\sum_{i=1}^{j-1}
i=\frac{M^{j+1}-1}{M-1}-1+\frac{j(j-1)}{2}\\ 
&<\frac{M^{j+1}}{M-1}+k^2<\frac{M^{j+1}}{M-1}+M^{j-2}, \text{ as
$j-2\geq 1$.}
\end{align*}
We find that
\begin{alignat*}{2}
&& \frac{M^{j+1}}{M-1}+M^{j-2} & \leq M^j + 2M^{j-1} \\
\Leftrightarrow && \frac{M^3}{M-1} + 1 & \leq M^2+2M\\
\Leftrightarrow && \; M^3 + M - 1 & \leq M^3 + 2M^2 - M^2 - 2M\\
\Leftrightarrow && 3M & \leq M^2+1, \text{ which holds for any } M
\geq 3.
\end{alignat*}

If the input stops after $j<k$ batches, then \alg has colored
$M^j$ vertices with colors of at least $k$, and its cost is at
least $k\cdot M^j$. Otherwise, consider batch $k$. Each of the
vertices $v_1,v_2,\ldots ,v_{k-1}$ was given a color no larger
than $k-1$, and since they induce a clique, each of the colors
$\{1,2,\ldots,k-1\}$ is used exactly once on these $k-1$ vertices.
When the set $V_k$ is presented, each vertex of $V_k$ is connected
to each vertex in $\{v_1,v_2,\ldots ,v_{k-1}\}$, so every vertex
of $V_k$ must be colored with a color that is at least $k$. Thus,
in this batch the total cost of the algorithm is at least $k\cdot
M^k$.

We showed that if the input stops after batch $j$ (for $1 \leq j
\leq k$), the cost of the algorithm is at least $k\cdot M^j$,
while the cost of an optimal solution does not exceed
$M^j+2M^{j-1}$. The performance ratio is thus at least
$$\frac{k\cdot M^j}{M^j+2M^{j-1}} =
\frac{k}{1+2/M}>\frac{k}{1+2(k-\rho)/(2\rho)}=\frac{2\rho
k}{2\rho+2k-2\rho}=\rho\,,$$ as  $M>\frac{2\rho}{k-\rho}$. This
contradicts the assumption that the competitive ratio was $\rho$.
\mbox{}\qed\end{proof}

Combining Lemmas~\ref{knownk-ub} and~\ref{knownk-lb} gives the following result:
\begin{theorem}
\label{thm:knownk}
For sum coloring in $k$ batches, with $k$ known in
advance, the optimal (strict) competitive ratio is $k$.
\end{theorem}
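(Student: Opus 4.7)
The plan is to observe that the theorem is an immediate corollary of the two lemmas just proved, and the only substance is to check that the upper bound (given for the strict competitive ratio) and the lower bound (given for the asymptotic competitive ratio) together pin down both measures to exactly $k$.

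First I would invoke Lemma~\ref{knownk-ub}: it supplies the algorithm \kbatch and shows that its strict competitive ratio is at most $k$. Since the strict competitive ratio is an upper bound on the (asymptotic) competitive ratio, this gives $k$ as an upper bound on both measures simultaneously. No additional work is needed on the upper-bound side.

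Next I would invoke Lemma~\ref{knownk-lb}: for any candidate algorithm \alg and any $\rho<k$, the adversarial construction there exhibits an instance with $\alg(\sigma)/\opt(\sigma) \to k$, so no algorithm can be $\rho$-competitive for any $\rho<k$, even in the asymptotic sense (recall that the lemma argues by contradiction from the hypothesis that a $\rho$-competitive bound with additive constant $b$ holds, so the additive slack is absorbed by choosing $M$ large). Consequently $k$ is also a lower bound on the (asymptotic, hence also strict) competitive ratio.

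Combining the two inequalities gives that both the competitive ratio and the strict competitive ratio of the optimal $k$-batch sum coloring algorithm are exactly $k$, which is precisely the theorem statement. There is no real obstacle here; the only thing to be mindful of is the direction of the implication between strict and asymptotic ratios, so that one verifies the upper bound is stated in the stronger form (strict) and the lower bound in the weaker form (asymptotic), which is exactly how Lemmas~\ref{knownk-ub} and~\ref{knownk-lb} are phrased.
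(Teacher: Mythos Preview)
Your proposal is correct and matches the paper's approach exactly: the paper simply states that the theorem follows by combining Lemmas~\ref{knownk-ub} and~\ref{knownk-lb}, and your elaboration on why the strict upper bound and asymptotic lower bound together pin down both measures is accurate. One small inaccuracy: the proof of Lemma~\ref{knownk-lb} as written does not explicitly carry an additive constant $b$ through the argument, but since $M$ (and hence $\opt$) can be taken arbitrarily large while the performance ratio stays bounded below by $k/(1+2/M)$, the additive slack is indeed absorbed, so your reading of the lemma as an asymptotic lower bound is correct.
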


\begin{remark}\label{rmksum}
Observe that the graph in the proof of Lemma~\ref{knownk-lb} for the case $k=2$ 
has no cycles. Thus, there is no online algorithm for sum coloring of forests
in $k$ batches with competitive ratio strictly smaller than~$2$. This can
be strengthened to trees by adding one extra vertex in the second batch which is
adjacent to all of the isolated vertices from the first batch.
\end{remark}

\begin{theorem}\label{ff_for_sum_col}
For sum coloring of trees in $k$ batches,  First-Fit 
is strictly $2$-competitive, and this is the best possible competitive
ratio, even if $k$ is known in advance.
\end{theorem}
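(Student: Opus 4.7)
The plan is to handle the upper and lower bounds separately, noting that both essentially reduce to material already in the paper.

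For the upper bound, I would appeal to the fact already cited in the introduction: First-Fit is strictly $2$-competitive for online sum coloring of trees. The key observation is that First-Fit is batch-oblivious — when it processes a vertex, it assigns the smallest positive integer not appearing on any already-colored neighbor, using no information about where batch boundaries lie or about future vertices. Thus, running First-Fit within each batch (in any order on that batch's vertices) produces exactly the same coloring as some online First-Fit execution on the vertex sequence obtained by concatenating the batches. Since the online guarantee is strict $2$-competitive, the same bound transfers verbatim to the $k$-batch model, independently of $k$ and of whether $k$ is known in advance.

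For the lower bound, I would invoke Remark~\ref{rmksum}, which already establishes a lower bound of $2$ on the competitive ratio for sum coloring of \emph{trees} in $k=2$ batches (using the adversary of Lemma~\ref{knownk-lb} specialized to $k=2$, with an extra vertex added in the second batch to connect the first-batch isolated vertices into a single tree). To upgrade this to every $k \geq 2$ with $k$ known in advance, I would let the adversary present the same two-batch tree instance in batches $1$ and $2$, then send $k-2$ empty batches afterwards. Empty batches contribute nothing to either the algorithm's cost or $\opt$, so the ratio achieved by the adversary is unchanged. Crucially, when the algorithm colors batch $1$, knowing the final number of batches $k$ provides no useful information, since the actual adversarial content is entirely decided in batches $1$ and $2$. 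This gives the matching lower bound of $2$.

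There is no significant technical obstacle here: the upper bound is an immediate consequence of the online result (already cited), and the lower bound is essentially the content of Remark~\ref{rmksum} combined with trivial empty-batch padding. The only subtlety worth stating explicitly is why advance knowledge of $k$ does not help — namely, that the adversary's construction in batches $1$ and $2$ forces the algorithm's cost independently of how the remaining (empty) batches might have been used.
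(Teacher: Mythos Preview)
Your lower-bound argument matches the paper's: both invoke the $k=2$ construction from Lemma~\ref{knownk-lb} (as packaged in Remark~\ref{rmksum}) and observe that it already gives a tree; your explicit padding with empty batches is a fine way to handle general $k\geq 2$ with $k$ known.

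The upper bound is where you diverge. The paper does \emph{not} cite an external source for the online strict $2$-competitiveness of First-Fit on trees; the sentence you point to in the introduction is the paper's own claim, proved precisely in this theorem (the only citation there, \cite{Bor+}, is for ``other graph classes''). So your appeal to ``the fact already cited in the introduction'' is circular as written. The paper instead gives a self-contained inductive proof: for any tree on $t$ vertices, First-Fit's color sum is at most $2t-1$ (induct on $t$; the last vertex colored joins $X$ previously colored subtrees of sizes $t_1,\ldots,t_X$ and receives a color at most $X+1$, giving total at most $\sum_j(2t_j-1)+(X+1)=2t-1$), while $\opt\geq t$. Your reduction ``batch First-Fit $=$ some online First-Fit run'' is perfectly valid, but it only completes a proof once you have the online bound from somewhere independent---either an outside reference or the paper's own induction. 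If you replace the circular citation with that short inductive argument, your write-up is complete and essentially coincides with the paper's.
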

\begin{proof}
The lower bound of $2$ for any online algorithm holds because the
graph in the proof of Lemma~\ref{knownk-lb} for the
  case $k=2$ has no cycles, and thus there is no online algorithm for
  sum coloring of forests in $2$ batches whose competitive ratio is
  strictly smaller than 2. This can be extended to trees by adding one
  extra vertex in the second batch which is adjacent to all
  vertices of the first batch.
To prove the upper bound of First-Fit, we will show by induction
on $t$ that when First-Fit is used for coloring a tree (a
connected subgraph) on $t$ vertices the sum of the colors of the
vertices is at most $2t-1$. Obviously, no algorithm can have a
cost below $t$ (in fact, if $t>1$, then any coloring has cost at
least $t+1$).

For $t=1$, the claim follows trivially as a single vertex is
assigned color $1$.  Assume that the claim holds for all $t'<t$
and we prove it for $t$.  Consider the last vertex to be colored
by First-Fit.  This vertex will be connected to some number of
existing trees (connected components of the graph prior to this
iteration), and we denote this number (of components) by $X$. If
$X=0$, then the new vertex gets color $1$ and becomes a singleton
(so $t=1$ and the sum of colors is $1=2t-1$). If $X>0$, then for
every $j=1,2,\ldots ,X$, the $j$-th tree with $t_j$ vertices had
the sum of colors $2t_j-1$ (or less). The new vertex has a color
not exceeding $X+1$ (as the new vertex is connected to one vertex
of each of the $X$ existing trees), and we have $\sum_{j=1}^X t_j
=t-1$. We find that the total cost of the solution returned by
First-Fit does not exceed $\sum_{j=1}^X
(2t_j-1)+X+1=2(t-1)-X+(X+1)=2t-1$.
\mbox{}\qed\end{proof}

\subsection{Number of batches unknown in advance}
Next, we consider the case where the number of batches $k$ is not
known in advance.  
Thus, to obtain a given competitive ratio, this ratio must be obtained
after each batch.
Note that the algorithm described in the proof of
Lemma~\ref{knownk-ub} cannot be used in this case. While the
algorithm is not well defined if $k$ is unknown in advance to the
algorithm, it may seem that modifying the value of $k$ by doubling
would result in a competitive ratio of $O(k)$, but no such
algorithm exists. We prove that for any positive nondecreasing
sequence $f(i)$, which is defined for integer values of $i$ (where
$f(i)\geq 1$ for $i\geq 1$), no algorithm with competitive ratio
$O(f(k))$ can be given if the series $S_f=\sum_{i=1}^{\infty}
\frac{1}{f(i)}$ is divergent. On the other hand, we show that if
this series is convergent, then such an algorithm can be given.
This shows, in particular, that the best possible competitive
ratio is $O(k \log k (\log \log k)^2)$ (since the series for this
function converges according to the Cauchy condensation test), and
it is $\Omega(k\log k \log \log k)$ (since the series for this
function diverges according to the Cauchy condensation test). In
fact it is
$
O(k \log k \log\log k \cdots$
$(\log^{(x)} k)^2)$ and $\Omega(k \log k
\log\log k \cdots \log^{(x)} k)
$,
for any positive integer $x$.

Consider a sequence $f(i)$ for which $S_f$ is convergent, and let
$c_f$ be its limit. We present an algorithm, \batch, for this
variant of sum coloring. Initially, all colors are declared {\em
available}. When coloring the $i$th batch, its induced subgraph is
first colored using an optimal procedure, \optcolor. Let $t_i$
denote the maximum color used by \optcolor for batch $i$. For each
$j=1,2,\ldots,t_i$ in increasing order, vertices that \optcolor
gives color $j$ will be colored using the largest available color
among the colors $1,2,\ldots,\lfloor j \cdot c_f \cdot
f(i)\rfloor$. Then, this color is declared {\em taken}. This color
is now unavailable for vertices of future batches and for vertices
of the current batch that were assigned a color larger than $j$ by
\optcolor.
 If this process is
successful (there always exists an available color), then we say that
batch $i$ is {\it feasible}.

Assuming that all batches are feasible, using arguments similar to
those used for Lemma~\ref{knownk-ub}, we obtain an upper bound on
the competitive ratio of \batch as follows. Since a color used by
\optcolor in a particular batch is assigned to an available color
by \batch, if all batches are feasible, each pair, $(i,j)$, where
$i$ is a batch number and $j$ is a color assigned by \optcolor in
batch $i$, is given a different color. Since \optcolor produces a
proper coloring, \batch does too. The function $f$ is
nondecreasing, so the color assigned to a given vertex by \batch
is at most $c_f\cdot f(k)$ times the color assigned by \optcolor.

\begin{lemma}
\label{lemma:compr}
Consider sum coloring in $k$ batches, where the value of $k$ is not
known in advance.
If for all $1 \leq i \leq k $, batch $i$ is feasible, then the
competitive ratio of \batch is at most $c_f \cdot f(k)$.
\end{lemma}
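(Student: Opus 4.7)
The plan is to split the argument into a \emph{correctness} part, showing that the produced coloring is proper, and a \emph{cost} part, bounding the total sum of colors by $c_f\cdot f(k)\cdot\opt(G)$. The lemma only requires the latter, but since the statement refers to ``the competitive ratio of \batch'' I would first verify that \batch returns a valid coloring, since this is what makes that ratio well-defined.

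For correctness, I would dispatch cross-batch and within-batch adjacencies separately. If two adjacent vertices $u,v$ lie in different batches, then the \batch-color of the earlier one is declared taken at the end of its batch and is therefore unavailable when the later one is processed, so the two receive different \batch-colors. If $u$ and $v$ lie in the same batch $i$, then \optcolor assigns them distinct colors $j<j'$; since \batch processes the \optcolor-colors in increasing order of $j$ and marks the chosen actual color for each $j$ as taken before moving to larger values, the actual colors picked for $j$ and $j'$ must differ. Vertices sharing a common \optcolor-color are independent in the batch's induced subgraph and may safely receive the same \batch-color.

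For the cost bound, I would use the explicit cap built into the rule: a vertex in batch $i$ receiving \optcolor-color $j$ is assigned a \batch-color at most $\lfloor j\cdot c_f\cdot f(i)\rfloor\le j\cdot c_f\cdot f(k)$, using that $f$ is nondecreasing and $i\le k$. Summing over all vertices yields
\[
\batch(G)\;\le\;c_f\cdot f(k)\cdot\sum_{i=1}^{k}\optcolor_i(G),
\]
where $\optcolor_i(G)$ denotes the sum-coloring cost returned by \optcolor on batch $i$'s induced subgraph. Finally, $\sum_{i=1}^{k}\optcolor_i(G)\le\opt(G)$ follows by restricting an optimal sum coloring of the full graph to each batch: the restriction is a proper coloring of the induced subgraph (so \optcolor, being optimal, achieves at most this cost), and summing the restrictions over all batches reassembles $\opt(G)$.

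The one genuinely delicate point I anticipate is the within-batch half of correctness, namely verifying that the ``largest available color $\le\lfloor j\cdot c_f\cdot f(i)\rfloor$'' rule, combined with processing \optcolor-colors in increasing order of $j$ and marking chosen colors as taken before moving on, really does produce distinct \batch-colors for distinct \optcolor-colors within a batch. Once this is in hand, the rest is a short assembly of the three ingredients above and closely mirrors the informal sketch in the paragraph immediately preceding the lemma.
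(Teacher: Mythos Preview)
Your proposal is correct and follows essentially the same approach as the paper's proof, which appears as the informal paragraph immediately preceding the lemma statement (as you yourself note). The only remark is that the point you flag as ``genuinely delicate'' is in fact immediate: once the color chosen for \optcolor-color $j$ is marked taken, it is by definition unavailable when any $j'>j$ is processed, so distinct \optcolor-colors within a batch automatically receive distinct \batch-colors; the paper phrases this uniformly by observing that each pair $(i,j)$ is assigned a distinct actual color.
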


\begin{lemma}
\label{lemma:feasible}
All batches for the algorithm \batch are feasible.
\end{lemma}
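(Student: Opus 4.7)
I would show that at each call of the coloring subroutine, step $(i,j)$ is feasible, i.e., the candidate range $\{1,\ldots,N_{i,j}\}$ with $N_{i,j} = \lfloor j\cdot c_f\cdot f(i)\rfloor$ contains at least one color not yet taken. My plan combines a sharp counting bound on small-cap events with a fixed-point contradiction argument.

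First I would prove the following counting inequality: for every positive integer $N$, the number of events (pairs $(i',j')$ with $i'<i$ and $1\leq j'\leq t_{i'}$, together with $(i,j')$ for $1\leq j'\leq j$, so including the current step) whose cap $\lfloor j'\, c_f\, f(i')\rfloor$ is at most $N$, is at most $N$. The contribution of batch $i'$ is the number of positive integers $j'\leq t_{i'}$ with $j' < (N+1)/(c_f\,f(i'))$, which is strictly less than $(N+1)/(c_f\,f(i'))$. Summing over $i'\leq i$ and using the strict inequality $\sum_{i'\leq i} 1/f(i') < c_f$ (strict because $i$ is finite while $c_f$ is the full convergent sum) yields a total strictly less than $N+1$, hence at most $N$ since it is an integer.

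Next, suppose for contradiction that every color in $\{1,\ldots,N_{i,j}\}$ is already taken when the algorithm reaches step $(i,j)$. Define a nondecreasing sequence $C^{(0)} = N_{i,j}$ and $C^{(\ell+1)} = \max\{\lfloor j'\,c_f\,f(i')\rfloor : (i',j') \text{ is a prior event whose assigned color lies in } \{1,\ldots,C^{(\ell)}\}\}$. By induction on $\ell$ I would show that $\{1,\ldots,C^{(\ell)}\}$ is fully taken at the current moment. The inductive step uses the largest-available rule: the event attaining $C^{(\ell+1)}$ landed on a color $\leq C^{(\ell)}$ only because at that prior time every color in $\{C^{(\ell)}+1,\ldots,C^{(\ell+1)}\}$ was already occupied, and occupied colors are never released, so they remain taken now.

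The sequence $C^{(\ell)}$ is bounded above by the maximum cap of any prior event, so it stabilizes at some $C^*$. At this fixed point, the $C^*$ events that occupy $\{1,\ldots,C^*\}$ all have cap at most $C^*$; together with the current event, whose cap $N_{i,j}$ is at most $C^{(0)}\leq C^*$, this yields at least $C^*+1$ events (prior plus current) with cap at most $C^*$, contradicting the counting inequality. I expect the main obstacle to be the inductive step ensuring $\{1,\ldots,C^{(\ell)}\}$ remains fully taken as $\ell$ grows: one must track carefully that the colors ``pushed out'' by earlier applications of the largest-available rule are permanently occupied, so that ``fully taken'' is preserved throughout the iteration.
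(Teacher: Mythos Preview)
Your proof is correct, but the paper reaches the same contradiction by a shorter, one-step argument that avoids the iterative fixed-point construction.

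Instead of building the sequence $C^{(\ell)}$, the paper lets $t+1$ be the \emph{smallest available} color at the moment step $(i,j)$ fails; then $\{1,\ldots,t\}$ is already fully taken and $N_{i,j}\leq t$. The key observation is that since color $t+1$ is available now and colors are never released, $t+1$ was available at \emph{every} earlier step. Hence any prior event $(i',j')$ with cap $\lfloor j'\,c_f\,f(i')\rfloor \geq t+1$ would, by the largest-available rule, have chosen a color $\geq t+1$. Contrapositively, every one of the $t$ events occupying $\{1,\ldots,t\}$ has cap $\leq t$; together with the current event $(i,j)$ (also with cap $\leq t$) this gives $t+1$ events of cap $\leq t$, which is exactly your counting contradiction.

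So where you exploit the persistence of ``taken'' (taken then $\Rightarrow$ taken now) and iterate upward, the paper exploits the dual persistence of ``available'' (available now $\Rightarrow$ available then) to jump directly to the fixed point: their $t$ is your $C^*$. Your route is sound and the counting inequality is the same, but the paper's device of looking at the smallest currently-available color eliminates the inductive climb entirely.
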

\begin{proof}
Assume that the algorithm has an infeasible batch, let $i$ be the
minimal index of a batch that is not feasible, and $j$ be the
smallest color that was used by \optcolor, for which \batch cannot
find an available color among the first $\lfloor j \cdot c_f
\cdot f(i) \rfloor$ colors. 
Let $t+1$ be the smallest available color at the time when \batch
tries to select a color for vertices that \optcolor gives color
$j$ in batch $i$. That is, all the $t$ smallest colors were
selected earlier (during the first $i-1$ batches or earlier during
batch $i$), and color $t+1$ is still available. By definition,
$t+1>\lfloor j \cdot c_f \cdot f(i)\rfloor$.

The color $t+1$ was available when previous colors were selected.
Consider a pair $j',\ell$ such that $\ell \leq i$, $1 \leq j' \leq
t_{\ell}$, and $j'<j$ if $\ell=i$. If $t+1 \leq \lfloor j' \cdot
c_f \cdot f(\ell)\rfloor$, then the color selected by \batch for
\optcolor's color $j'$  for batch $\ell$ is above $t+1$, since the
maximum available color no larger than $\lfloor j' \cdot c_f \cdot
f(\ell) \rfloor$ was selected. Thus, all colors $1,2,\ldots,t$
were selected for pairs $j',\ell$ satisfying $\lfloor j' \cdot c_f
\cdot f(\ell)\rfloor \leq t$, and thus $j' \cdot c_f \cdot
f(\ell)< t+1$. For a given value of $\ell$, the number of suitable
values of $j'$ is smaller than $\frac{t+1}{c_f\dot f(\ell)}$. As
the color $t+1$ cannot be selected for \optcolor's color $j$ for
batch $i$, $j$ is one such value for batch $i$, so for this batch
the number of values of $j'$ whose selected colors are no larger
than $t$ is smaller than $\frac{t+1}{c_f\dot f(i)}-1$. The total
number of colors strictly below $t+1$ selected in the first $i$
batches just before \optcolor's color $j$ for batch $i$ is
considered is strictly below $\sum_{\ell=1}^{i} \frac{t+1}{c_f\dot
f(\ell)} -1 \leq \frac{t+1}{c_f}\sum_{\ell=1}^{\infty} \frac
{1}{f(\ell)}-1=t$, where the last inequality holds since the
series converges to $c_f$, contradicting the assumption that all
the first $t$ colors were already selected.
\mbox{}\qed\end{proof}

By Lemmas~\ref{lemma:compr} and~\ref{lemma:feasible}, we obtain:
\begin{theorem}\label{unknownk-ub}
Consider sum coloring in at most $k$ batches and let $f$ be
any nondecreasing function with $f(i)\geq 1$ for all
$i\geq 1$, whose
 series $S_f$ converges to $c_f$.
Then, the algorithm \batch is $(c_f\cdot
f(k))$-competitive, even if the value $k$ is not known in advance.
\end{theorem}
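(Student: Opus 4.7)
The theorem follows immediately by combining the two preceding lemmas, so the plan is simply to chain them. First I would invoke Lemma~\ref{lemma:feasible} to establish that, for any input consisting of at most $k$ batches, every batch in the execution of \batch is feasible; that is, whenever the algorithm needs to assign a real color to the vertices that \optcolor colored with $j$ in batch $i$, there is still an available color in $\{1,2,\ldots,\lfloor j \cdot c_f \cdot f(i)\rfloor\}$. With this feasibility in hand, I would then invoke Lemma~\ref{lemma:compr}, which directly yields that \batch is $(c_f \cdot f(k))$-competitive.

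A point worth highlighting explicitly in the write-up is that this bound holds even when $k$ is unknown in advance. This is because, by construction, the range of colors searched by \batch during batch $i$ is $\{1,\ldots,\lfloor j \cdot c_f \cdot f(i)\rfloor\}$, which depends only on $i$, $j$, $c_f$, and the fixed function $f$, but not on $k$. Monotonicity of $f$ then implies that for every $i\leq k$ the colors used in batch $i$ are at most $c_f\cdot f(k)$ times the corresponding \optcolor-color, so the competitive ratio is attained after \emph{every} batch, which is exactly what is required in the unknown-$k$ setting.

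The substantive content has already been carried out in the two lemmas: Lemma~\ref{lemma:compr} amounts to the bookkeeping observation that distinct (batch, \optcolor-color) pairs receive distinct \batch-colors bounded by $c_f\cdot f(k)$ times their \optcolor-values, while Lemma~\ref{lemma:feasible} is the main obstacle and relies on the convergence of $S_f$ via a counting argument on how many earlier pairs could have consumed the colors $\{1,\ldots,t\}$. Since both facts are available, the proof of the theorem itself is a one-sentence combination.
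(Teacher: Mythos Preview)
Your proposal is correct and matches the paper's approach exactly: the paper also derives the theorem directly from Lemmas~\ref{lemma:compr} and~\ref{lemma:feasible}, with no additional argument. Your extra remarks about why the bound holds even when $k$ is unknown are accurate elaborations of points already implicit in the text preceding Lemma~\ref{lemma:compr}.
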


Now, we provide the lower bound.

\begin{theorem}\label{unknownk-lb}
Consider sum coloring in $k$ batches, where the value of $k$ is not
known in advance.
Let $f(i)$ be a nondecreasing sequence with $f(i)\geq 1$ for all
$i\geq 1$,
 whose series $S_f$ is divergent.
Then, there is no constant $\ratio$ such that a
competitive ratio of at most $\ratio \cdot f(k)$ can be obtained
for all $k\geq 1$.
\end{theorem}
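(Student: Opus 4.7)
My plan is to prove the theorem by contradiction. Suppose there is a constant $c$ such that the algorithm $A$ achieves competitive ratio at most $c \cdot f(k)$ for every $k \geq 1$. Using the divergence of $S_f = \sum_{i \geq 1} 1/f(i)$, I would pick an integer $K$ so large that $\sum_{i=1}^{K} 1/f(i) > c$, and design an adaptive $k'$-batch instance for some $k' \leq K$ whose ratio against $A$ exceeds $c \cdot f(k')$.

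The instance construction extends the template of Lemma~\ref{knownk-lb} to the adaptive setting. Batch $i$ would consist of $N_i$ new independent vertices, each joined to every previously designated vertex $v_1, \ldots, v_{i-1}$, so that these together form a clique. I would choose $N_i = M^i$ with $M$ very large, ensuring that the cost of the current batch dominates those of all earlier batches. After every batch $i$, the adversary either halts the input (leaving all remaining batches empty) or selects a designated vertex $v_i$ from batch $i$ and continues; the selection rule is designed so that the designated clique retains ``small'' colors, forcing later batches to avoid an ever-growing initial segment of the color space. The halting criterion is the obvious one: halt at batch $i$ if the current ratio $A_i/\mathrm{OPT}_i$ already exceeds $c \cdot f(i)$, which would immediately settle the claim with $k' = i$.

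The crux of the argument is a color-budget accounting tied to the divergent series. In each surviving batch, where the adversary chose not to halt, the bound $A_i \leq c \cdot f(i) \cdot \mathrm{OPT}_i$ forces a substantial fraction of batch-$i$ vertices into colors of magnitude $O(c \cdot f(i))$, which effectively consumes a $\Theta(1/f(i))$ share of a bounded ``small-color'' budget. Summed over $K$ surviving batches, this consumption is $\Omega(\sum_{i=1}^K 1/f(i))$, and by the choice of $K$ it exceeds any absolute constant of order $c$. This is morally the dual of the feasibility argument in the proof of Lemma~\ref{lemma:feasible}, with convergence of $S_f$ replaced by divergence; the overflow forces the adversary to find a batch $k$ at which $A$ is already non-$c\cdot f(k)$-competitive, completing the contradiction.

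The main obstacle will be carrying out this color-budget accounting rigorously in the adaptive setting. One has to verify that, in the surviving branch, the designated vertex $v_i$ can always be chosen so that the consumption of small-color resource is genuinely $\Theta(1/f(i))$ and not smaller, and that $\mathrm{OPT}$ remains on the order of $\sum_{j \leq i} j N_j$ regardless of the adversary's choices. The rapid growth $N_i = M^i$ handles the ``current batch dominates'' bookkeeping routinely along the lines of Lemma~\ref{knownk-lb}; the genuine work lies in matching the per-batch ratio constraint to a $\Theta(1/f(i))$ debit on a single global resource whose capacity is bounded by the competitive ratio constant, so that divergence of $S_f$ directly forces the capacity to be exceeded.
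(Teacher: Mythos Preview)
Your high-level strategy (adaptive adversary, halt early if the ratio is already too large, otherwise designate vertices that pin down small colors, and argue a small-color budget is exhausted because $\sum 1/f(i)$ diverges) is exactly the paper's. The gap is in the construction itself: extending Lemma~\ref{knownk-lb} verbatim, with an \emph{independent set} in each batch and \emph{one} designated vertex $v_i$, does not support the claimed $\Theta(1/f(i))$ debit.

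Concretely, against your adversary the algorithm that colors every batch-$i$ vertex with the smallest color not used by $v_1,\ldots,v_{i-1}$ gives all of batch $i$ the single color $i$. After $i$ batches its cost is $\sum_{j\le i} j\,M^j = (1+o(1))\,i\,M^i$, while \opt is $(1+o(1))\,M^i$, so the ratio is $(1+o(1))\,i$. For $f(i)=i$ (a divergent series) this is $\le c\cdot f(i)$ for any fixed $c>1$ and all $i$, so your adversary never gets to halt with a violation and never produces a contradiction. The flaw is that ``a substantial fraction of batch-$i$ vertices lie in colors $O(c\,f(i))$'' can mean they all sit on \emph{one} color; an independent set blocks one color per batch, not a $\Theta(1/f(i))$ share of anything.

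The paper fixes this by making each batch a union of $M^{i-1}$ disjoint \emph{cliques} of size $3\lfloor M/f(i)\rfloor$ and designating, from some clique, all of its $\ge M/f(i)$ vertices whose colors are at most $10CM$ (these exist by the ratio bound, since otherwise the batch alone already costs too much). Because the designated vertices in a batch form a clique, they occupy $\ge M/f(i)$ \emph{distinct} small colors, and future batches are joined to all of them. Now the budget is the fixed set of small colors $\{1,\ldots,10CM\}$ and the total consumption is $\sum_{i\le k} M/f(i)$; divergence lets one choose $k$ with $\sum_{i\le k} 1/f(i)>11C$, overflowing the budget. So the missing idea is precisely to replace the single designated vertex by a designated clique of size $\Theta(M/f(i))$ (and to present many parallel cliques per batch so that the pigeonhole/averaging step guarantees one such clique with enough cheap vertices).
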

\begin{proof}
Assume for the sake of contradiction that there exists a constant
$c>1$ and an algorithm \alg, such that \alg is $(\ratio \cdot
f(k))$-competitive, for any number $k \geq 1$ of
batches. Let $\largerratio =\max\{2\ratio,10\}$. Let $k$ be such
that $\sum_{i=1}^{k} {1}/{f(i)}> 11 \largerratio$ (where
$k$ must exist as the series $S_f$ is divergent). Fix a large
integer $M$, such that $M > 130 \cdot \largerratio^2 \cdot
f(k)^2$. We say that a color $a$ is {\it small} if $a\leq 10
\largerratio M$.

We now describe an adversarial input. Batch $i$ of the input
consists of $M^{i-1}$ cliques of size
$3\lfloor{M}/{f(i)}\rfloor$. There are no edges between
vertices in different cliques of the same batch. A vertex that
\alg colors with a small color is called a {\em cheap} vertex. For
each batch $i$, if there is at least one clique containing at
least $M/f(i)$ cheap vertices, then one such clique is chosen, and
the cheap vertices of this clique are called {\em special}
vertices. In each batch, all vertices are connected to all special
vertices of previous batches and to no other vertices in previous
batches. Thus, no colors used for special vertices can be used in
later batches, and there is at most one special vertex for each
small color.

The input will contain at most $k$ batches. If, after some batch
$i<k$, the sum of colors used by \alg is larger than $c \cdot
f(i)$ times the optimal sum of colors, there will be no more
batches. Otherwise, all $k$ batches are given. Thus, if there are
fewer than $k$ batches, the theorem trivially follows. Below, we
consider the case where there are exactly $k$ batches.

We first give an upper bound on the optimal sum of colors for the
first $i$ batches, for $1 \leq i \leq k$.

{\em Claim}: For every value of $i$ (such that $1\leq i \leq k$), the optimal
sum of colors for the first $i$ batches is at most
$19M^{i+1}/(f(i))^2$.

We now prove the claim:
Consider the
following proper coloring. For each clique $K$, let $n_K$ denote
the number of vertices in $K$ that are not special. These vertices
are colored using the colors $1,2,\ldots,n_K$. Each special vertex
$v$ is given the color $3M+b$, where $b$ is the color assigned to
$v$ by \alg. As the vertices of each clique are only connected to
special vertices of previous batches, and they are not connected
to vertices of other cliques of the same batch, this coloring is
proper.

For $i=1$, there is only one clique, and the sum of colors in this
coloring (where there is one vertex of every color in
$\{1,2,\ldots3\floor{\frac{M}{f(i)}}\}$), is
$$\sum_{\ell=1}^{3\floor{\frac{M}{f(i)}}} \ell < \frac{9
  M^2}{(f(i))^2} < \frac{19 M^{i+1}}{(f(i))^2}\,.$$
For $i \geq 2$, the sum of the colors of special vertices is at
most
$$\sum_{\ell=3M+1}^{3M+10 \largerratio M} \ell < (10 \largerratio
M)(3M+10 \largerratio M) < 130 \largerratio^2 M^2 <
\frac{M^3}{(f(k))^2} \leq \frac{M^3}{(f(i))^2} \leq
\frac{M^{i+1}}{(f(i))^2}\,,$$ and the sum of the colors of the
remaining vertices is at most
$$\sum_{j=1}^{i} M^{j-1} \sum_{\ell=1}^{3\floor{\frac{M}{f(j)}}} \ell
  < \sum_{j=1}^{i} M^{j-1}\frac{9M^2}{(f(j))^2} = \sum_{j=1}^{i} \frac{9M^{j+1}}{(f(j))^2}
< \frac{18 M^{i+1}}{(f(i))^2} \,,$$ where the last inequality
follows by showing that for every $j\leq i$ we have
$\frac{9M^{j+1}}{(f(j))^2} \leq \frac{9M^{i+1}}{(f(i))^2} \cdot
\frac{1}{2^{i-j}}$ by induction on $i-j$.  For $i-j=0$, the claim
trivially holds.  Assume that it holds for $i-j$ and denote
$j'=j-1$, we will show it for $i-j'$. We have
$$\frac{9M^{j'+1}}{(f(j'))^2} = \frac{9M^{j+1}}{(f(j))^2} \cdot
\frac{(f(j))^2}{M\cdot (f(j'))^2} \leq \frac{9M^{j+1}}{(f(j))^2}
\cdot \frac{(f(i))^2}{M} \leq \frac{1}{2} \cdot
\frac{9M^{j+1}}{(f(j))^2} \leq \frac{9M^{i+1}}{(f(i))^2} \cdot
\frac{1}{2^{i-j'}}$$ where the first inequality holds because $1
\leq f(j') \leq f(j)  \leq f(i)$, the second inequality holds by
our choice of $M$, and the last inequality holds by the induction
assumption.
 Thus, for this coloring, the total sum of the colors is less
than $19M^{i+1}/(f(i))^2$.

This concludes the proof of the claim.

We now show that, by the assumption that \alg is $(c \cdot
f(i))$-competitive on $i$ batches, $1 \leq i \leq k$, each batch
$i$ must have a clique with at least $M/f(i)$ cheap vertices.
Assume for the sake of contradiction that some batch $i$ does not
contain a clique with at least $M/f(i)$ cheap vertices. Then, each
clique in the batch contains at most $\floor{M/f(i)}$ cheap
vertices and hence at least $2\floor{M/f(i)}$ vertices with colors
larger than $10 \largerratio M$. Thus, the sum of colors used for
this batch is more than $M^{i-1} \cdot 2\floor{M/f(i)} \cdot 10
\largerratio M > 10 \largerratio M^{i+1}/f(i) \geq 20 \ratio
M^{i+1}/f(i)$. By Claim 1
, this gives a
ratio of more than $$\frac{20 \ratio
M^{i+1}/f(i)}{19M^{i+1}/(f(i))^2} > \ratio \cdot f(i)\,.$$ Thus,
the total number of special vertices is at least $\sum_{i=1}^{k}
M/f(i) > 11 \largerratio M$, contradicting the fact that there is
at most one special vertex for each of the small colors.
\mbox{}\qed\end{proof}

\bibliographystyle{abbrv}
\bibliography{batch_color}

\begin{thebibliography}{10}

\bibitem{BBDGT15}
J.~Balogh, J.~B\'{e}k\'{e}si, G.~D\'{o}sa, G.~Galambos, and Z.~Tan.
\newblock Lower bound for 3-batched bin packing.
\newblock {\em Discrete Optimization}, 21:14--24, 2016.

\bibitem{BBGM09}
J.~Balogh, J.~B{\'{e}}k{\'{e}}si, G.~Galambos, and M.~C. Mark{\'{o}}t.
\newblock Improved lower bounds for semi-online bin packing problems.
\newblock {\em Computing}, 84(1--2):139--148, 2009.

\bibitem{BBHST98}
A.~Bar-Noy, M.~Bellare, M.~Halldorsson, H.~Shachnai, and T.~Tamir.
\newblock On chromatic sums and distributed resource allocation.
\newblock {\em Information and Computation}, 140:183--202, 1998.

\bibitem{Bor+}
A.~Borodin, I.~Ivan, Y.~Ye, and B.~Zimny.
\newblock On sum coloring and sum multi-coloring for restricted families of
  graphs.
\newblock {\em Theoretical Computer Science}, 418:1--13, 2012.

\bibitem{D15}
G.~D\'{o}sa.
\newblock Batched bin packing revisited.
\newblock {\em Journal of Scheduling}, 2015.
\newblock In press.

\bibitem{E16}
L.~Epstein.
\newblock More on batched bin packing.
\newblock {\em Operations Research Letters}, 44:273--277, 2016.

\bibitem{GLS81}
M.~Gr{\"o}tschel, L.~Lov{\'a}sz, and A.~Schrijver.
\newblock The ellipsoid method and its consequences in combinatorial
  optimization.
\newblock {\em Combinatorica}, 1(2):169--197, 1981.

\bibitem{GJY05}
G.~Gutin, T.~Jensen, and A.~Yeo.
\newblock Batched bin packing.
\newblock {\em Discrete Optimization}, 2:71--82, 2005.

\bibitem{GLff88}
A.~Gy\'{a}rf\'{a}s and J.~Lehel.
\newblock On-line and {First-Fit} colorings of graphs.
\newblock {\em Journal of Graph Theory}, 12:217--227, 1988.

\bibitem{GL90}
A.~Gy\'{a}rf\'{a}s and J.~Lehel.
\newblock First fit and on-line chromatic number of families of graphs.
\newblock {\em Ars Combinatoria}, 29(C):168--176, 1990.

\bibitem{H00}
M.~M. Halld{\'{o}}rsson.
\newblock Online coloring known graphs.
\newblock {\em The Electronic Journal of Combinatorics}, 7, 2000.

\bibitem{HS94}
M.~M. Halld{\'{o}}rsson and M.~Szegedy.
\newblock Lower bounds for on-line graph coloring.
\newblock {\em Theoretical Computer Science}, 130(1):163--174, 1994.

\bibitem{KST16}
H.~A. Kierstead, D.~A. Smith, and W.~T. Trotter.
\newblock First-fit coloring on interval graphs has performance ratio at least
  5.
\newblock {\em European Journal of Combinatorics}, 51:236--254, 2016.

\bibitem{KT81}
H.~A. Kierstead and W.~T. Trotter.
\newblock An extremal problem in recursive combinatorics.
\newblock {\em Congressus Numerantium}, 33:143--153, 1981.

\bibitem{kub04}
E.~Kubicka.
\newblock The chromatic sum of a graph: History and recent developments.
\newblock {\em International Journal of Mathematics and Mathematical Sciences},
  2004(30):1563--1573, 2004.

\bibitem{KS89}
E.~Kubicka and A.~J. Schwenk.
\newblock An introduction to chromatic sums.
\newblock In {\em 17th ACM Computer Science Conference}, pages 39--45. ACM
  Press, 1989.

\bibitem{NB08}
N.~S. Narayanaswamy and R.~S. Babu.
\newblock A note on {First-Fit} coloring of interval graphs.
\newblock {\em Order}, 25(1):49--53, 2008.

\bibitem{NSS99}
S.~Nicolosoi, M.~Sarrafzadeh, and X.~Song.
\newblock On the sum coloring problem on interval graphs.
\newblock {\em Algorithmica}, 23(2):109--126, 1999.

\bibitem{ZCW03}
G.~Zhang, X.~Cai, and C.~K. Wong.
\newblock Scheduling two groups of jobs with incomplete information.
\newblock {\em Journal of Systems Science and Systems Engineering}, 12:73--81,
  2003.

\end{thebibliography}

\end{document}